\definecolor{DarkGreen}{rgb}{0.1,0.5,0.1}
\definecolor{DarkRed}{rgb}{0.5,0.1,0.1}
\definecolor{DarkBlue}{rgb}{0.1,0.1,0.5}
\definecolor{Lagunita}{RGB}{0,124,146}
\definecolor{BrightRed}{RGB}{177,4,14}
\definecolor{Mint}{RGB}{0, 155, 118}
\definecolor{Sky}{RGB}{0, 152, 219}
\newcommand{\F}{{\mathbb F}}
\newcommand{\tr}{\mathrm{tr}}
\newcommand{\n}{\newline}
\renewcommand{\epsilon}{\varepsilon}
\theoremstyle{plain}
\declaretheorem[name=Theorem,numberwithin=section]{theorem}
\declaretheorem[name=Lemma,sibling=theorem]{lemma}
\newtheorem*{lemma*}{Lemma}
\newtheorem*{theorem*}{Theorem}
\newtheorem{observation}[theorem]{Observation}
\newtheorem{cor}[theorem]{Corollary}
\newtheorem{corollary}[theorem]{Corollary}
\newtheorem{proposition}[theorem]{Proposition}
\newtheorem{fact}[theorem]{Fact}
\theoremstyle{definition}
\newtheorem{definition}[theorem]{Definition}
\newcommand{\eval}{\mathrm{eval}}
\DeclarePairedDelimiterX\set[1]\lbrace\rbrace{\def\given{\;\delimsize\vert\;}#1}
\colorlet{LightSky}{Sky!30}
\title{Wedge-Lifted Codes}
\author{
Jabari Hastings\thanks{Department of Mathematics, Stanford University, Research supported by CURIS 2020. \texttt{jabarih@stanford.edu}} ,
Amy Kanne\thanks{Department of Mathematics, Stanford University, Research supported by CURIS 2020. \texttt{akanne@stanford.edu}} ,
Ray Li\thanks{Department of Computer Science, Stanford University.  Research supported by an NSF GRFP under grant DGE - 1656518. \texttt{rayyli@stanford.edu}} ,
Mary Wootters\thanks{Departments of Computer Science and Electrical Engineering, Stanford University. This work is partially supported by NSF grants CCF-1657049 and CCF-1844628. \texttt{marykw@stanford.edu}}}
\begin{document}

\newcommand{\codename}{wedge-lifted code }
\newcommand{\codenamens}{wedge-lifted code} 
\newcommand{\Codename}{Wedge-lifted code }
\newcommand{\Codenamens}{Wedge-lifted code} 
\newcommand\ddd{d}

\maketitle
\begin{abstract}
    We define \emph{wedge-lifted codes}, a variant of lifted codes, and we study their locality properties.  We show that (taking the trace of) wedge-lifted codes yields binary codes with the $t$-disjoint repair property ($t$-DRGP).  When $t = N^{1/2\ddd}$, where $N$ is the block length of the code and $\ddd \geq 2$ is any integer, our codes give improved trade-offs between redundancy and locality among binary codes.
\end{abstract}
\section{Introduction}\label{sec:intro}

In this work, we define and study \emph{Wedge-Lifted Codes}, and show they yield improved binary error correcting codes for a notion of locality known as the $t$-disjoint repair group property.

An \emph{error correcting code} (or simply \emph{code}) $\mathcal{C}\subset\Sigma^N$ is a set of strings of a fixed \emph{length} $N$ over an \emph{alphabet} $\Sigma$.
If $\Sigma=\{0,1\}$, $\mathcal{C}$ is called a \emph{binary} code.
We measure the quality of a code by the \emph{redundancy}, denoted $K^{\perp}$, defined as $K^{\perp}=N-K$, where $K=\log_{|\Sigma|}|\mathcal{C}|$ is the \emph{dimension} of the code.
It is desirable for codes to be larger, or equivalently to have less redundancy.

In this work we are interested in constructing better (less redundant) binary codes with \emph{locality}.
There are several notions of locality in this literature, but informally a code exhibits locality if we can correct one or a small number of erasures by looking only \em locally \em at a few other symbols of the codeword. 
In this work we construct codes with a notion of locality known as the $t$-disjoint repair group property ($t$-DRGP).
\begin{definition} 
A code $\mathcal{C}\subseteq \Sigma^N$ has the $t-$\textit{disjoint repair group property} ($t$-DRGP) if for every $i \in [N]$, there is a collection of $t$ disjoint subsets $S_1,\ldots,S_t\subseteq [N]\setminus\{i\}$ and functions $f_1,\ldots,f_t$ so that for all $c \in \mathcal{C}$ and $j \in [t]$, $f_j\left(c|_{S_j}\right) = c_i$.
\end{definition}

Codes with the $t$-DRGP are motivated by distributed storage, where one desires efficient recovery from a few erasures (see surveys \cite{RSGKBR13,Ska16}). 
Codes with the $t$-DRGP are also relevant to \emph{private information retrieval} (PIR) in cryptography, as all codes (with a systematic encoding) with the $t$-DRGP also form \emph{$(t+1)$-PIR} codes, a notion defined in \cite{FVY15} for $(t+1)$-server private information retrieval.  Further, when $t=\Omega(N)$ is large, codes with the $t$-DRGP are equivalent to \emph{locally correctable codes}~\cite{KT00,Woo10}.

Previously the best constructions for codes with the disjoint repair group property were given by lifted multiplicity codes \cite{LW19, HPPVY20} (see also \cite{Wu15}), but these codes have very large alphabet sizes, which is undesirable from the perspective of distributed storage and PIR. 
Hence, a natural question, explicitly asked in \cite{LW19}, is, what are the best constructions of \emph{binary} codes with the $t$-DRGP?
Our work makes progress on this question by giving new constructions of binary codes with the $t$-DRGP with the best known redundancy for some values of $t$.
\begin{theorem}
\label{thm:main}
\label{thm:main-0}
For positive integers $\ddd$ and infinitely many $N$, for $t=N^{1/2\ddd}$, there exist binary codes of length $N$ with redundancy $t^{\log_2(2-2^{-\ddd})}\sqrt{N}$ that have the the $(t-1)$-DRGP.
\end{theorem}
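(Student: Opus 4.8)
The plan is to construct the codes as traces of wedge-lifted codes over a suitable finite field. First I would set up the ambient algebraic object: fix $q = 2^m$ for appropriate $m$, let the coordinates be indexed by $\F_q^\ddd$ (so $N = q^\ddd$, or perhaps $N = (q^\ddd)^{\ddd}$ after an appropriate product structure — the exponent $2\ddd$ in $t = N^{1/2\ddd}$ suggests the block length is a square of $q^\ddd$ for some $q$), and consider multivariate polynomials whose restrictions to lines (or to the relevant low-dimensional affine subspaces) lie in a prescribed base code. The ``wedge'' qualifier should mean that instead of lifting along \emph{all} lines, we lift along a restricted family — those lines lying in coordinate wedges / products of lower-dimensional pieces — which is precisely what lets us control the redundancy while keeping many disjoint repair groups.

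The core steps I would carry out, in order, are: (1) \emph{Define the wedge-lifted code} $\mathcal{C}$ over $\F_q$ as the evaluation code of the space of polynomials agreeing on the relevant subspaces with a base code (e.g.\ a Reed--Solomon-type or affine-invariant base code), and record its dimension via a monomial-counting argument; the dimension count is where the exponent $\log_2(2 - 2^{-\ddd})$ will emerge, presumably from a recursion on $\ddd$ in which each level contributes a factor of $(2 - 2^{-\ddd})$ to the count of ``bad'' (non-included) monomial degree patterns. (2) \emph{Establish the $(t-1)$-DRGP for $\mathcal{C}$}: for each coordinate $i$ (a point $x \in \F_q^{\ddd\cdot 2}$), exhibit $t - 1$ pairwise-disjoint repair sets. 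These will come from the $\approx q$ lines through $x$ in the wedge family (or cosets thereof), partitioned so that disjointness holds; the recovery function $f_j$ is polynomial interpolation / the parity check of the base code restricted to that line, which recovers $c_i$ from the other points on the line. (3) \emph{Take the trace / Bose--Mesner-style subfield subcode or trace code} $\mathrm{Tr}_{\F_q/\F_2}(\mathcal{C})$ to get a binary code of the same length $N$; argue that the trace operation preserves (or only mildly weakens) the disjoint repair property — each repair function composes with the linear trace map, and disjointness of the coordinate sets is unaffected — and bound the redundancy of the trace code by $m$ times the redundancy of $\mathcal{C}$ over $\F_q$, which after substituting $q, m, \ddd$ in terms of $N$ and $t$ yields $t^{\log_2(2 - 2^{-\ddd})}\sqrt{N}$. (4) \emph{Verify the parameter identities}: check $t = N^{1/2\ddd}$ and that the redundancy bound simplifies to the claimed closed form, and confirm these parameters are realizable for infinitely many $N$ (one $N$ per choice of $m$).

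The main obstacle I expect is Step~(2) combined with Step~(1): simultaneously (a) getting the monomial/dimension count for the wedge-lift to collapse to the clean exponent $\log_2(2-2^{-\ddd})$, and (b) showing that after restricting to the wedge family of subspaces one \emph{still} has enough disjoint repair groups, namely $t - 1$ of them rather than far fewer. There is real tension here: shrinking the lifting family to cut redundancy also shrinks the pool of candidate repair sets, so the argument must show the wedge structure is chosen exactly at the sweet spot where $\Theta(q) = \Theta(t)$ disjoint groups survive. A secondary subtlety is ensuring the trace code does not lose dimension unexpectedly (the trace of a code can behave irregularly), which I would handle by working with a base code that is defined over $\F_2$ to begin with, or by a Delsarte-duality argument bounding $\dim \mathrm{Tr}(\mathcal{C})$ from below. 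The disjointness bookkeeping in Step~(2) — ensuring the $t-1$ sets $S_1, \dots, S_{t-1}$ are genuinely pairwise disjoint and avoid $i$ — is routine once the line family is set up, so I do not anticipate difficulty there beyond careful indexing.
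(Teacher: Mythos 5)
Your high-level skeleton — lift to get an $\F_q$-ary code, count bad monomials to bound redundancy, trace down to $\F_2$, then massage parameters — does match the paper's outline. But the concrete construction you propose diverges from the paper in ways that matter, and one step as written would not yield the claimed bound.

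First, the construction is \emph{bivariate}, not $\ddd$-variate. The paper works throughout with $P \in \F_q[X,Y]$ and block length $N = q^2$; the parameter $\ddd$ enters not through the number of variables but through the field size $q = 2^{\ell'\ddd}$ and the choice of a multiplicative subgroup $H \leq \F_q^\times$ of order $|H| = (q-1)/(q^{1/\ddd}-1)$. A ``wedge'' through $\vec{p}$ is the union of all affine lines through $\vec{p}$ whose slopes lie in a fixed coset $gH$, and the code requires the sum of $P$ over each wedge to vanish. Your description of wedges as ``coordinate wedges / products of lower-dimensional pieces'' in a space $\F_q^{2\ddd}$ is a different object; the recursion you anticipate ``on $\ddd$, each level contributing a factor $(2-2^{-\ddd})$'' is not how the count goes. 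Instead the exact count $(2^{\ddd+1}-1)^{\ell'}$ of bad monomials comes from a bitwise analysis: Lucas's theorem characterizes bad $(a,b)$ by conditions on the binary representations $a \vee b = q-1$ and the existence of $i \equiv b \pmod{|H|}$ with $i \leq_2 a \wedge b$, and the choice of $|H|$ forces $i$ to have a periodic bit pattern, making the count factor over the $\ell'$ residue classes of bit positions. Relatedly, your claim that ``$\Theta(q) = \Theta(t)$ disjoint groups survive'' misreads the parameters: in the paper $q = t^\ddd$, and the number of disjoint repair groups is $|\mathcal H| = (q-1)/|H| = q^{1/\ddd} - 1 = t - 1$, which is much smaller than $q$; the disjointness is automatic because distinct cosets give lines with distinct slopes, which meet only at $\vec p$.

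Second, and this is the real gap: you propose to ``bound the redundancy of the trace code by $m$ times the redundancy of $\mathcal{C}$ over $\F_q$'' and assert this yields $t^{\log_2(2-2^{-\ddd})}\sqrt{N}$. It does not. The $\F_q$-code already has $\F_q$-symbol redundancy exactly $(2^{\ddd+1}-1)^{\ell'} = t^{\log_2(2-2^{-\ddd})}\sqrt{N}$, so any extra factor of $m = \log_2 q$ would blow the bound by a $\log N$ factor — which would lose the entire improvement over \cite{FVY15}. The point of taking the trace (rather than, say, expanding symbols into bits) is precisely to avoid that factor. The correct argument is the Delsarte-type inequality $\dim_{\F_2}(\tr_2 \mathcal C) \geq \dim_{\F_q}(\mathcal C)$, so that the bit-redundancy $N - \dim_{\F_2}(\tr_2 \mathcal C)$ is \emph{at most} the symbol-redundancy $N - \dim_{\F_q}(\mathcal C)$, with no multiplicative loss. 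You do mention the Delsarte route as a fallback, but it is not a fallback — it is the only version of the trace step that gives the theorem, and you should commit to it. The DRGP-preservation part of your trace argument (repair functions compose with $\tr$, disjointness of supports is unchanged) is correct and essentially the paper's Lemma on trace codes.
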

Theorem~\ref{thm:main-0} gives improved constructions of binary codes with the $t$-DRGP when $t=N^{1/2\ddd}$ for integers $\ddd\ge 2$. (see Figure~\ref{fig:litreview} for a visual comparison and Section~\ref{sec:related} for a more detailed comparison):
When $t=N^{1/4}$, Theorem~\ref{thm:main-0} improves over a construction of \cite{FischerGW17}, and for $t=N^{1/2\ddd}$ for $\ddd\ge 3$, Theorem~\ref{thm:main-0} improves over the constructions of \cite{FVY15}.
As all codes with the $(t-1)$-DRGP are also $t$-PIR codes, Theorem~\ref{thm:main-0} also gives improved constructions of $t$-PIR codes in the same parameter settings.

Our Wedge-Lifted Codes have relatively small alphabet size (compared to previous work using lifted multiplicity codes, for example), and so we are able to use them to obtain good binary codes by taking the coordinate-wise trace over $\mathbb{F}_2$.

It is an interesting question whether the construction of \cite{FVY15} can be beaten for all $t\in(1,\sqrt{N})$.
That is, for all $t=N^{\alpha}$ with $\alpha\in(0,1/2)$, are there binary codes with the $t$-DRGP and redundancy $O(t^{1-\varepsilon}\sqrt{N})$ for some $\varepsilon>0$ (possibly depending on $\alpha$)?
Our work shows this is true for $\alpha=1/2\ddd$ when $\ddd$ is a positive integer.
We additionally show that, for a dense collection of $\alpha\in(0,1/2)$, our binary codes essentially match the redundancy bound of $O(t\sqrt{N})$ from \cite{FVY15} (see Theorem~\ref{thm:fvy}).
This is proved with a naive bound, so it is possible that, with a more refined analysis, our codes could achieve the improvement to $O(t^{1 - \varepsilon}\sqrt{N})$ for all $\alpha\in(0,1/2)$.

\paragraph{Organization}
In the remainder of this section, we highlight some related work and our approach.
In Section~\ref{sec:prelim}, we state some preliminaries for our work.
In Section~\ref{sec:wedge}, we define and analyze our construction of Wedge-Lifted Codes. 
In Section~\ref{sec:trace}, we show demonstrate how to turn the codes in Section~\ref{sec:wedge}, which are over a $q$-ary alphabet, into binary codes, proving Theorem~\ref{thm:main}.

\begin{figure}[!ht]
\centering
\begin{tikzpicture}[xscale=30, yscale=30]

\draw[->] (0,.48) to (0,.81);
\node[anchor=south] at (0,.82) {$\log_N( K^\perp )$};

\node[anchor=east] at (-.01, .5) {\footnotesize $.500$};
\node[anchor=east] at (-.01, .619) {\footnotesize $.619$};
\node[anchor=east] at (-.01, .651) {\footnotesize $.651$};
\node[anchor=east] at (-.01, .702) {\footnotesize $.702$};
\node[anchor=east] at (-.01, .714) {\footnotesize $.714$};
\node[anchor=east] at (-.01, .75) {\footnotesize $.750$};
\node[anchor=east] at (-.01, .792) {\footnotesize $.792$};

\draw[->] (0,.48) to (0.3,.48);
\node[anchor=west] at (0.3,0.48) {$\log_N(t)$};
\node at (1/8, .45) {$\frac{1}{8}$};
\node at (1/6, .45) {$\frac{1}{6}$};
\node at (1/4, .45) {$\frac{1}{4}$};

\draw[thick,Sky, domain=0:.25, dashed, variable=\x] plot ( {\x}, {\x + .5} ); \node[anchor=west](fvy-label) at (.26, 0.75) {\footnotesize \cite{FVY15}};

\draw[thick, dashed, Sky] (0, .792) to (0.25, 0.792) {};
\node[anchor=west](gks-label) at (.26, .792) {\footnotesize \cite{GuoKS13}};

\node[draw, fill=Sky, circle, scale=0.3](fgw) at (1/4, .714) {};
\node[anchor=west](fgw-label) at (.26, .714) {\footnotesize
\cite{FischerGW17}};

\draw[dotted, thick, violet] (0,.5) to (0.25, .5);
\node[anchor=west, label={[align=center]\footnotesize \cite{Woo16,RV16}\\\footnotesize lower bound}](lbound-label) at (.3, .49) {};

\draw[domain=0:.25, dashed, variable=\x, LightSky, thick] plot ( {\x} , {.5 + \x * .585});
\node[anchor=west](lm-label) at (.26, .65) {\footnotesize \cite{LW19} (big $\Sigma$) };

\draw[domain=0:0.25, variable=\x, Sky, dashed, thick] plot ( { \x }, {(3/4 + \x*(log2(3/8) + 3/2}  );

\node[anchor=west](lm-label) at (.26, .77) {\footnotesize \cite{LW19} (binary) };

\foreach \n in {2,...,30} {
  \node[draw, fill=BrightRed, circle, scale=0.3](w\n) at ({1/(2*\n)}, {0.5 + log2(2-pow(2,-\n))/(2*\n)}) {};
}
\node[anchor=west](this-label-2) at (.26, .702) {\footnotesize \color{BrightRed}This work};

\draw[dotted, black](1/6, 0.48) to (w3);
\draw[dotted, black](1/8, 0.48) to (w4);
\draw[dotted, black](1/4, 0.48) to (1/4, 0.792);

\draw[dotted, black] (fgw) to (0, .714);
\draw[dotted, black] (w2) to (0, .702);
\draw[dotted, black] (w3) to (0, .651);
\draw[dotted, black] (w4) to (0, .619);

\end{tikzpicture}

\caption{The best trade-offs known between the number $t$ of disjoint repair groups and the redundancy $K^\perp$, for $t \leq N^{1/4}$. Our results appear as the red dots.}

\label{fig:litreview}

\end{figure}
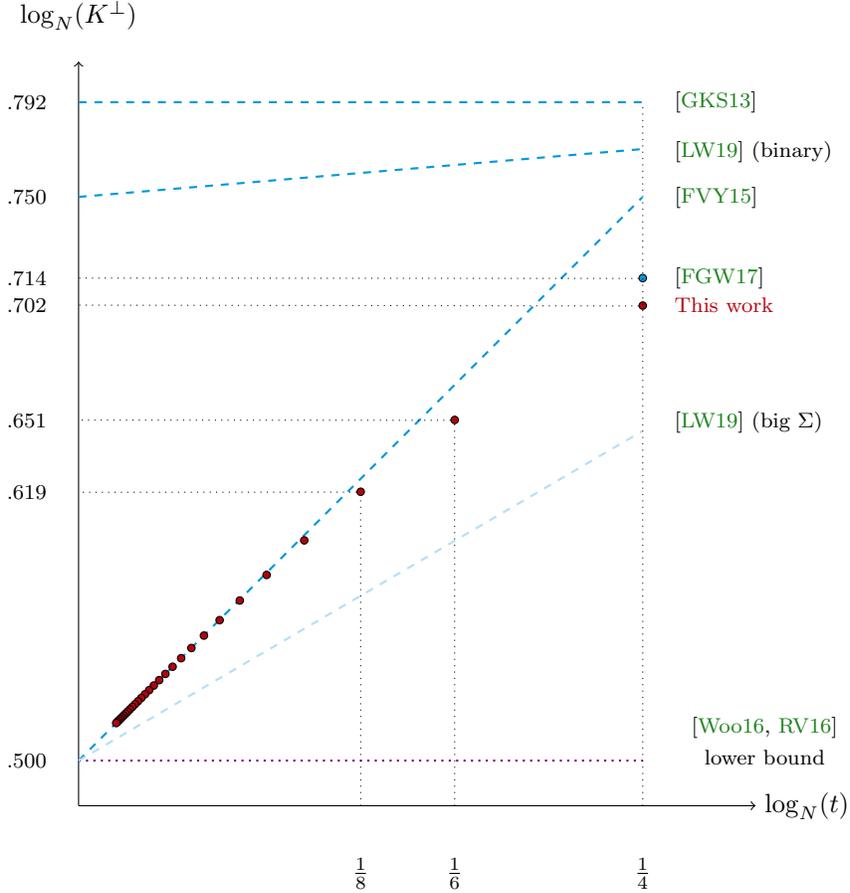

\subsection{Related work}
\label{sec:related}

\paragraph{Prior work on DRGP}
We refer the reader to Figure~\ref{fig:litreview} for a picture of the best known constructions of codes with the $t$-DRGP.
For $t\le \sqrt{N}$, \cite{FVY15} constructed codes with $t$-DRGP and $t\sqrt{N}$ bits of redundancy.
We show (Corollary~\ref{cor:fvy}) that there always exist Wedge-Lifted Codes that \emph{at least} match the redundancy this construction.
However, the construction of \cite{FVY15} is not optimal, at least when $t$ is somewhat large.
\cite{GuoKS13} showed that \emph{Lifted Reed Solomon} codes have the $N^{1/2}$-DRGP with redundancy $N^{\log_4(3)}\approx N^{0.792} \ll t\sqrt{N}$.
A later work \cite{FischerGW17} also improved \cite{FVY15} when $t=N^{1/4}$, giving codes with redundancy at most $N^{0.714}$.
In the setting where $t=N^{1/4}$, Theorem~\ref{thm:main} improves on \cite{FischerGW17} by giving codes with redundancy roughly $N^{0.702}$, and for $t=N^{1/2\ddd}$ for all $\ddd\ge 3$, Theorem~\ref{thm:main-0} improves \cite{FVY15} by giving codes with redundancy $t^{1-\varepsilon}\sqrt{N}$ for some $\varepsilon>0$ depending on $\ddd$.

When $t > N^{1/2}$, \cite{AY17} showed that for all $\delta>0$ there exist codes with the $t$-DRGP when $t = N^{1-\delta}$ and with redundancy $O(N^{1-\varepsilon})$ for some $\varepsilon>0$ depending on $\delta$. The best tradeoffs between $\delta$ and $\varepsilon$ are given in \cite{HPPVY20}, which refines the analysis of \cite{GuoKS13}.
Our work focuses on the setting where $t \le N^{1/2}$ because we believe this parameter regime is already interesting, and because we believe that the improvements over existing binary codes given by Wedge-Lifted Codes are the strongest in this parameter regime.
By generalizing our bivariate construction of Wedge-Lifted Codes to more than two variables, we expect it is possible to obtain improved binary codes with the $t$-DRGP when $t > N^{1/2}$.

\paragraph{Other notions of locality}
The $t$-DRGP is closely related to other notions of locality.
Any linear code with the $t$-DRGP is by definition a $(t+1)$-private information retrieval (PIR) code.
In PIR codes, one only needs to recover the information symbols, rather than all codeword symbols \cite{FVY15, BE16, AY17}.
Our constructions are linear and therefore immediately give constructions of $(t+1)$-PIR codes.
The $t$-DRGP is also closely related to \emph{batch codes}, which generalize PIR codes \cite{IKOS04, DGRS14, AY17, HPPV20}.
The $t$-DRGP is also related to \emph{locally correctable codes} (LCCs) \cite{GuoKS13, KSY14, HOW13, KMRS16} and \emph{locally decodable codes} (LDCs) \cite{KT00, E13}.
Any code of length $N$ has the $\Omega(N)$-DRGP if and only if it is a constant query LCC, and any code of length $N$ is an $\Omega(N)$-PIR code if and only if it is a constant-query LDC.
This equivalence between LCCs/LDCs and disjoint repair groups was in fact used to prove lower bounds for LCCs/LDCs \cite{KT00, Woo10}.
Towards applications in distributed storage, a generalization of the $t$-DRGP called \emph{LRCs with availability} have also been studied \cite{WZ14, RPDV14, TB14, TBF16}, where the disjoint repair groups are additionally constrained to have a bounded size.

\subsection{Our approach}
The best known constructions of codes \cite{GuoKS13, FischerGW17, LW19, HPPVY20} with the DRGP leverage an idea called \emph{lifting} \cite{GuoKS13}.
The basic idea of lifting is to improve the Reed-Muller Code, which has some locality properties, by relaxing the construction so that the locality properties are preserved but so that the redundancy decreases.
To illustrate lifting and how we adapt the idea to our work, we assume for the rest of the discussion that we are working with bivariate polynomial codes over a field $\mathbb{F}_q$ of characteristic 2.
In this way, codewords can be viewed not only as elements of $\mathbb{F}_q^{q^2}$, but also as (evaluations of) bivariate polynomials in $\mathbb{F}_q[X,Y]$ of degree at most $q-1$ in each variable.
While lifted codes can be analyzed over larger field characteristics, lifted codes seem to obtain the best redundancy over fields characteristic 2, so we focus our attention on characteristic 2.

The bivariate Reed-Muller Code over $\mathbb{F}_q$ obtained by evaluating polynomials of total degree $\le q-2$ has length $N=q^2$, has redundancy roughly $N/2$, and has the $\sqrt{N}$ disjoint repair group property, because each line forms a repair group and there are $\sqrt{N}$ lines in $\mathbb{F}_q^2$ passing through each point.
Indeed, each line forms a repair group because the line-restriction of any bivariate polynomial with total degree at most $q-2$ is a univariate polynomial of degree at most $q-2$, and thus the sum of the evaluations along any line is 0.
Guo Kopparty and Sudan \cite{GuoKS13} show that, perhaps surprisingly, if we remove the total degree condition but still require lines to form repair groups, we can decrease the redundancy of the code from $\Omega(N)$ to $N^{0.792}$ (if the field $\mathbb{F}_q$ has characteristic 2).  

To obtain codes with the $t$-DRGP for values of $t$ other than $\sqrt{N}$, variations of this idea have been proposed.
In \cite{FischerGW17}, the authors define \emph{partially lifted codes}, requiring that only \emph{some}, rather than all, lines in $\mathbb{F}_q^2$ form repair groups, and show this gives an improved construction for $t=N^{1/4}$.
The works of \cite{Wu15, LW19, PV19, HPPVY20} adapt \cite{GuoKS13} by adding derivatives to the construction to obtain \emph{lifted multiplicity codes}, and show these give the best known constructions for the DRGP for all values of $t\in(1,N^{0.99})$ (but with a large alphabet).
In \cite{LMMPW20}, the authors consider another variant, \emph{Hermitian lifted codes} where the polynomials are evaluated on a Hermitian curve rather than on all of $\mathbb{F}_q^2$.

In this work, we propose another method of adapting the construction of \cite{GuoKS13} to other values of $t$, which we call \emph{wedge-lifts}.
This method avoids the large alphabets of lifted multiplicity codes \cite{Wu15, LW19, PV19, HPPVY20}, improves on the partially lifted codes of \cite{FischerGW17}, and additionally gives the best constructions of codes with the $t$-DRGP for $t=N^{1/2\ddd}$ for integers $\ddd\ge2$.
Roughly, the idea of wedge-lifts is that, instead of requiring that the repair groups are lines, we take the repair groups to be sets of lines (``wedges'') through a point.
For a line to form a repair group in the ordinary lifted code, we need the sum of the evaluations along the line to be 0.
For a wedge to form a repair group in the wedge-lifted code, we require the sum of all the points in the wedge to be 0.
If the wedges have odd size (the field has characteristic 2), then each ``wedge parity check'' is the sum of ``line parity checks'', and thus the ordinary lifted code is a subset of the wedge-lifted code.
We then show that, if the wedges are formed by lines lying in a coset of some subgroup $H\le \mathbb{F}_q^\times $ of the multiplicative group of $\mathbb{F}_q$, then the codes have redundancy at most $t\sqrt{N}$, matching the constructions in \cite{FVY15}.
Furthermore, for certain choices of the subgroup $H$, we can analyze the redundancy of the construction more precisely, giving the improved bounds on the redundancy in Theorem~\ref{thm:main}.
We are only able to precisely analyze wedge-lifted codes for some choices of the subgroup $H$, and we leave it to future work to give better bounds on the redundancy for other subgroups $H$.

Our constructions of wedge-lifted codes are $q$-ary codes.  In order to construct binary codes, as stated in Theorem~\ref{thm:main}, we take the (coordinate-wise) \emph{trace} of the code.   This technique of taking the trace, which is implicit in \cite{GuoKS13}, saves a $\log(q)$ multiplicative factor in the redundancy over a simpler technique of reducing the alphabet size, which is used in \cite{AY17, LW19, HPPVY20}.

\section{Preliminaries}
\label{sec:prelim}

\noindent In this section, we introduce the background and notation we will use throughout the paper.
\subsection{Notation and basic definitions}

\noindent Let $\F_q$ denote the finite field of order $q$ and let $\F_q^\times$ denote its multiplicative subgroup. We study linear codes $\mathcal C \subseteq \F_q^N$ of block length $N$ over an alphabet of size $q$. Throughout this paper, we assume that $\F_q$ has characteristic $2$ and write $q = 2^\ell$. 

We need the following tools to reason about the binary representations of integers.
\begin{definition}
Let $x,y$ be two non-negative integers with binary representations $x = \overline{x_{\ell-1}\dotsm x_0}, y = \overline{y_{\ell-1}\dotsm y_0}$. For each $j \in \{0,\ldots,\ell-1\}$, let $a_j = \max(x_j,y_j)$ and $b_j = \min(x_j,y_j)$. Then, define the bitwise-OR $\vee$ and bitwise-AND $\wedge$ of $x$ and $y$ as follows
\[x\vee y = \overline{a_{\ell-1}\dotsm a_0}\]
\[x\wedge y = \overline{b_{\ell-1}\dotsm b_0}\]
\end{definition}

\begin{definition}
For non-negative integers $x$ and $y$ with binary representations (possibly with leading zeros) $x = \overline{x_{\ell - 1} \dotsm x_0}, y = \overline{y_{\ell - 1} \dotsm y_0}$, respectively, we say that $x$ \emph{lies in the $2$-shadow} of $y$, denoted $x \leq_2 y$, if $x_i \leq y_i$ for all $i \in \{0, \dots, \ell- 1 \}$.
\end{definition}

We are interested in $2$-shadows because of Lucas's theorem.
\begin{theorem}[Lucas's Theorem]\label{thm:lucas}
Let $p$ be a prime and $x = \overline{x_{\ell-1}\dotsm x_0}, y = \overline{y_{\ell-1}\dotsm y_0}$ be written in base $p$. Then,
\[\binom{x}{y} = \prod_{i=0}^{\ell-1} \binom{x_i}{y_i} \mod p\]
In the case where $p=2$, then $\binom{x}{y} = 1 \mod p$ if and only if $y \leq_2 x$.
\end{theorem}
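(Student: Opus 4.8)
The plan is to prove Lucas's theorem by a generating-function argument in the polynomial ring $\F_p[T]$, and then read off the $p=2$ statement from the general congruence. The one fact I would take for granted is the \emph{freshman's dream}: since $p \mid \binom{p}{k}$ for $0 < k < p$, we have $(1+T)^p \equiv 1 + T^p \pmod{p}$ in $\F_p[T]$, and hence by induction $(1+T)^{p^i} \equiv 1 + T^{p^i} \pmod{p}$ for every $i \ge 0$.

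Writing $x = \sum_{i=0}^{\ell-1} x_i p^i$ with $0 \le x_i \le p-1$, the above gives
\[
(1+T)^x = \prod_{i=0}^{\ell-1} \inparen{(1+T)^{p^i}}^{x_i} \equiv \prod_{i=0}^{\ell-1} \inparen{1 + T^{p^i}}^{x_i} \pmod{p}.
\]
I would then compare the coefficient of $T^y$ on both sides. On the left it is $\binom{x}{y}$ by the binomial theorem. On the right, expanding each factor as $\inparen{1 + T^{p^i}}^{x_i} = \sum_{j_i=0}^{x_i}\binom{x_i}{j_i} T^{j_i p^i}$ and multiplying out, the coefficient of $T^y$ is $\sum \prod_i \binom{x_i}{j_i}$, where the sum ranges over tuples $(j_0,\dots,j_{\ell-1})$ with $0 \le j_i \le x_i \le p-1$ and $\sum_i j_i p^i = y$. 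Here is where care is needed: by uniqueness of the base-$p$ representation of $y$ (using $0 \le j_i \le p-1$), the only candidate tuple is $(y_0,\dots,y_{\ell-1})$, and if some $y_i > x_i$ there is no valid tuple at all, which is consistent with $\binom{x_i}{y_i}=0$. Hence the coefficient of $T^y$ on the right equals $\prod_{i=0}^{\ell-1}\binom{x_i}{y_i}$, and since two polynomials congruent mod $p$ have congruent coefficients, we conclude $\binom{x}{y} \equiv \prod_{i=0}^{\ell-1}\binom{x_i}{y_i} \pmod{p}$.

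For the $p=2$ case, each digit $x_i,y_i$ lies in $\{0,1\}$, and $\binom{x_i}{y_i}=1$ unless $(x_i,y_i)=(0,1)$, in which case it is $0$. Thus $\prod_i \binom{x_i}{y_i} \equiv 1 \pmod{2}$ exactly when $y_i \le x_i$ for every $i$, i.e.\ when $y \le_2 x$, and it is $0$ otherwise, which is the claimed characterization.

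I expect no genuine obstacle here; the proof is standard. The only subtle point is the coefficient-extraction step, where one must explicitly invoke uniqueness of base-$p$ digits to see that exactly one term of the product survives, and note that a congruence of polynomials mod $p$ is equivalent to termwise congruence of coefficients.
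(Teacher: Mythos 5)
The paper states Lucas's theorem without proof, treating it as a classical result, so there is no in-paper argument to compare against. Your generating-function proof via the freshman's dream $(1+T)^{p^i} \equiv 1+T^{p^i} \pmod p$, coefficient extraction, and uniqueness of base-$p$ digits is the standard textbook proof, and it is correct, including the specialization to $p=2$ where $\binom{x_i}{y_i}=0$ precisely when $(x_i,y_i)=(0,1)$.
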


The codes $\mathcal C$ we consider are  polynomial evaluation codes. 
For a polynomial $P \in \F_q[X_1, \dots, X_m]$, we write its corresponding codeword as 
\[
\eval_q(P) = \left < P(x_1,\dots x_m) \right >_{(x_1, \dots, x_m) \in \F_q^m}. 
\]
Above, we assume some fixed order on the elements of $\F_q^m$. In this paper, we focus on when $m$ is $1$ or $2$. We are concerned with the restriction of bivariate polynomials to lines.
\begin{definition}
For a line $L : \F_q \to \F_q^2$ with $L(T) = (L_1(T) , L_2(T))$ and a polynomial $P : \F_q^2 \to \F_q$, we define the restriction of $P$ on $L$, denoted $P|_L$, to be the unique polynomial of degree at most $q -1$ so that $P|_L(T) = P(L_1(T), L_2(T))$.
\end{definition}

\subsection{Trace codes}

Given a linear code $\mathcal C \subseteq \F_{p^\ell}^N$, it is sometimes desirable to construct another code $\mathcal C' \subseteq \F_p^N$ over a smaller alphabet that maintains some properties of $\mathcal C$. We briefly describe a way to create such a code using the trace function, $\tr_p$.
\begin{definition}
Let $\tr_p: \F_{p^\ell} \to \F_p$ be the trace function
\[\tr_p(\alpha) = \sum_{i=0}^{\ell-1} \alpha^{p^i}\]
We can extend $\tr_p:\F_{p^\ell}^N \to \F_p^N$ by defining
\[\tr_p(v) = (\tr_p(v_1),\ldots,\tr_p(v_N)) \in \F_p^N\]
We can further extend it to codes $\mathcal C\subseteq \F_{p^\ell}^N$ by taking the the trace of every vector in the code:
\[\tr_p( \mathcal C) = \{\tr_p(v) : v \in \mathcal C\} \subseteq \F_p^N\]
\end{definition}

Note that $\tr_p$ is a $\F_p$-linear map. Hence, if $\mathcal C$ is a linear code, then $\tr_p (\mathcal{C})$ is also a linear code. We can bound the rate of $\tr_p(\mathcal{C})$ in terms of the rate of $\mathcal C$ using the following corollary of Delsarte's theorem.
\begin{theorem}[Follows from Delsarte's theorem (see, e.g., \cite{HS09})] \label{thm:delsarte}
For any $\F_{p^\ell}$-linear code $\mathcal C \subseteq \F_{p^\ell}^N$,
\[
\dim \mathcal C \leq \dim \tr_p(\mathcal C) \leq \ell \cdot \dim \mathcal C
\]
\end{theorem}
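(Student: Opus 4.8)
The plan is to prove the two inequalities separately: the upper bound by a direct dimension count, and the lower bound by invoking Delsarte's theorem.

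\textbf{Upper bound.} For $\dim \tr_p(\mathcal{C}) \le \ell \cdot \dim \mathcal{C}$, view $\mathcal{C}$ as a vector space over $\F_p$ rather than over $\F_{p^\ell}$. Fixing an $\F_p$-basis of $\F_{p^\ell}$ shows $\dim_{\F_p}\mathcal{C} = \ell \cdot \dim_{\F_{p^\ell}}\mathcal{C}$. Since $\tr_p : \F_{p^\ell}^N \to \F_p^N$ is $\F_p$-linear, $\tr_p(\mathcal{C})$ is the image of $\mathcal{C}$ under an $\F_p$-linear map, so $\dim_{\F_p}\tr_p(\mathcal{C}) \le \dim_{\F_p}\mathcal{C} = \ell \cdot \dim_{\F_{p^\ell}}\mathcal{C}$.

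\textbf{Lower bound.} Delsarte's theorem identifies the dual (taken over $\F_p$, inside $\F_p^N$) of the trace code with the subfield subcode of the $\F_{p^\ell}$-dual:
\[
\tr_p(\mathcal{C})^\perp \;=\; \mathcal{C}^\perp \cap \F_p^N .
\]
Given this, it suffices to bound the $\F_p$-dimension of the subfield subcode $\mathcal{D} := \mathcal{C}^\perp \cap \F_p^N$. Here I would use the standard fact that a set of vectors in $\F_p^N$ that is linearly independent over $\F_p$ remains linearly independent over the extension field $\F_{p^\ell}$; hence the $\F_{p^\ell}$-span of $\mathcal{D}$ inside $\F_{p^\ell}^N$ has $\F_{p^\ell}$-dimension equal to $\dim_{\F_p}\mathcal{D}$, and this span lies inside $\mathcal{C}^\perp$. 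Therefore $\dim_{\F_p}\mathcal{D} \le \dim_{\F_{p^\ell}}\mathcal{C}^\perp = N - \dim_{\F_{p^\ell}}\mathcal{C}$, and so
\[
\dim_{\F_p}\tr_p(\mathcal{C}) = N - \dim_{\F_p}\mathcal{D} \ge N - \bigl(N - \dim_{\F_{p^\ell}}\mathcal{C}\bigr) = \dim_{\F_{p^\ell}}\mathcal{C}.
\]

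The routine ingredients — basis counting, $\F_p$-linearity of $\tr_p$, and the identity $\langle u, \tr_p(v)\rangle = \tr_p(\langle u, v\rangle)$ for $u \in \F_p^N$ (which gives the easy inclusion $\mathcal{C}^\perp \cap \F_p^N \subseteq \tr_p(\mathcal{C})^\perp$) — are all elementary. The one genuinely nontrivial input is the reverse inclusion $\tr_p(\mathcal{C})^\perp \subseteq \mathcal{C}^\perp$, i.e. the substance of Delsarte's theorem, which I would cite rather than reprove. If a self-contained argument were wanted, this reverse inclusion follows by noting that if $u \in \F_p^N$ satisfies $\tr_p(\langle u, v\rangle) = 0$ for all $v \in \mathcal{C}$, then replacing $v$ by $\lambda v$ for arbitrary $\lambda \in \F_{p^\ell}$ and using surjectivity of $\tr_p$ forces $\langle u, v\rangle = 0$; that scaling step is the only place where care is needed.
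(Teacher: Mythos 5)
The paper itself offers no proof here---it only cites \cite{HS09}---so there is no argument of the authors' to compare against. Your proof is correct and self-contained: the upper bound is the straightforward restriction-of-scalars count ($\dim_{\F_p}\mathcal{C} = \ell\cdot\dim_{\F_{p^\ell}}\mathcal{C}$) combined with $\F_p$-linearity of $\tr_p$, and the lower bound correctly passes through the Delsarte identity $\tr_p(\mathcal{C})^\perp = \mathcal{C}^\perp \cap \F_p^N$ and then uses the fact that $\F_p$-independent vectors in $\F_p^N$ remain $\F_{p^\ell}$-independent to obtain $\dim_{\F_p}\bigl(\mathcal{C}^\perp\cap\F_p^N\bigr) \le \dim_{\F_{p^\ell}}\mathcal{C}^\perp = N - \dim_{\F_{p^\ell}}\mathcal{C}$. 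Your parenthetical sketch of Delsarte's theorem itself is also sound: the reverse inclusion via scaling $v\mapsto\lambda v$ and non-degeneracy of the trace pairing on $\F_{p^\ell}/\F_p$ is exactly the right step, and it is the place where the hypothesis that $\mathcal{C}$ is $\F_{p^\ell}$-linear (not merely $\F_p$-linear) gets used.
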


\section{\Codenamens{s}}
\label{sec:wedge}
In this section, we define and analyze wedge-lifted codes. As mentioned above, we focus on bivariate codes, although as remarked above we believe our work could be extended to more variables.  To that end, from now on we use $\eval_q$ denote the bivariate evaluation map:  
\begin{definition}
Let $\eval_q: \F_q[X,Y] \to \F_q^{q^2}$ be the evaluation map of the polynomial at all of $\F_q^2$, namely $\eval_q(P) := (P(\textbf{x}))_{\textbf{x} \in \F_q^2}$
\end{definition}
In Section~\ref{ssec:wedge}, we give the formal definition of wedge-lifted codes.  In Section~\ref{ssec:cosets}, we analyze wedge-lifted codes in a special case where the construction arises from cosets of $\mathbb{F}_q^{\times}$.  Finally in Section~\ref{ssec:instantiations}, we further set parameters to fix particular coset decompositions, culminating in our main theorem of the section, Theorem~\ref{thm:q-ary-wedge}.

\subsection{Definition of \codenamens{s}}\label{ssec:wedge}
Finally we are ready to define wedges, wedge-restrictions, and  \codenamens{s}. 
\begin{definition}[Wedge]
For a point $\vec{p} =(x, y) \in \F_q^2$ and a set $H \subseteq \F_q$, we define the
\emph{wedge} through $\vec{p}$ formed by $H$, denoted $W_{H, \vec{p}}$, as the set of affine lines passing through $\vec{p}$ whose slope is in $H$,
\[
W_{H, \vec{p}} =  
\set*{
(L_1(T), L_2(T) ) \given 
\begin{array}{c}
L_1(T) = T \  \text{and  for some } \alpha \in H, \\
L_2(T) = \alpha(T - x) + y 
\end{array}
}.
\]
\end{definition}
\begin{definition}[Wedge restriction]
For a wedge $W_{H, \vec{p}}$ and polynomial $P \in \F_q[X, Y]$, we define the \emph{wedge restriction} of $P$ to the wedge $W_{H, \vec{p}}$, denoted $P|_{W_{H, \vec{p}}}$, to be the sum of the restrictions of $P$ to each line in the wedge,
\[
P|_{W_{H, \vec{p}}} =  \sum_{L \in W_{H, \vec{p}}} P|_L =  \sum_{\alpha \in H} \sum_{T \in \F_q} P(T, \alpha (T - x) + y).
\]
\end{definition}

Note that when $W_{H, \vec{p}}$ consists of an odd number of lines and the field has characteristic $2$, the wedge-restriction of $P$ is equivalent to the sum of $P$'s evaluations of each point in the wedge,
\begin{align*}
 P|_{W_{H, \vec{p}}} = \sum_{ {(x, y) } \in W_{H, \vec{p}}} P(x, y).
\end{align*}

\begin{definition}[\Codenamens{s}]\label{def:WLC}
Let $\mathcal H$ be a collection of disjoint subsets of $\F_q$, with each $H \in \mathcal H$ having odd size. The $(\mathcal H, q)$ \emph{wedge-lifted code} is a code $\mathcal C$ over alphabet $\Sigma = \F_q$ of length $q^2$ given by 
\begin{align}
\mathcal C = 
\set*{ 
\eval_q(P)
\given
\begin{array}{c}
P \in \F_q[X, Y] \ \text{and}, \ \text{for any}  \ H \in \mathcal H, \\
\text{and any} \ \vec{p} \in \F_q^2, P|_{W_{H, \vec{p}}} = 0  
\end{array} 
}.
\end{align}
\end{definition}

Following the approach of previous work \cite{GuoKS13,FischerGW17, LW19, PV19, HPPVY20}, we show that \codenamens{s} contain the evaluations of polynomials that lie in the span of ``good'' monomials. Informally, a monomial is $(\mathcal H, q)$-good if it restricts nicely to all the wedges we can construct with $\mathcal H$.

\begin{definition}[$(\mathcal H, q)$-good monomials]
Let $\mathcal H$ be a collection of disjoint subsets of $\F_q$.
We say that a monomial $P(X, Y) = X^aY^b$ with $0 \leq a, b \leq q - 1$ is $(\mathcal H, q)$-\emph{good} if for every $H \in \mathcal H$ and every $\vec{p} \in \F_q^2$, $P|_{W_{H, \vec{p}}}  = 0$, and say it is $(\mathcal H, q)$-\emph{bad} otherwise.
\end{definition}

By definition, the evaluations of all good monomials lie within our \codenamens{s}. Furthermore, monomials $X^aY^b$ with $a,b \leq q-1$ form a basis for polynomials of degree at most $q-1$, which are in bijection with $\F_q^{q^2}$ through the $\eval_q$ map. Therefore, we can obtain a lower bound on the rate of our code by finding a lower bound on the number of good monomials.

\begin{observation}\label{lem:rate}
For any $(\mathcal H, q)$ \codename $\mathcal C$, the redundancy of $\mathcal{C}$ is at most the number of $(\mathcal H , q)$-bad monomials.  
\end{observation}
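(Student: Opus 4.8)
The plan is to exhibit a large linearly independent set of codewords inside $\mathcal C$, namely the evaluations of the $(\mathcal H,q)$-good monomials, and conclude a lower bound on $\dim \mathcal C$ that translates directly into the claimed upper bound on the redundancy $K^\perp = N - K$.

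First I would record that the wedge restriction map $P \mapsto P|_{W_{H,\vec{p}}}$ is $\F_q$-linear in $P$ for every fixed $H$ and $\vec{p}$; this is immediate from its definition as a sum of point evaluations (equivalently, a sum of line restrictions). Consequently, the set $V$ of polynomials $P \in \F_q[X,Y]$ of degree at most $q-1$ in each variable satisfying $P|_{W_{H,\vec{p}}} = 0$ for all $H \in \mathcal H$ and all $\vec{p} \in \F_q^2$ is an $\F_q$-linear subspace, and $\mathcal C = \eval_q(V)$. Since $\eval_q$, restricted to polynomials of degree at most $q-1$ in each variable, is a bijection onto $\F_q^{q^2}$, it is in particular injective on $V$, so $\dim \mathcal C = \dim V$.

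Next I would use the monomial basis. The monomials $\set{X^aY^b : 0 \le a,b \le q-1}$ form an $\F_q$-basis for the space of polynomials of degree at most $q-1$ in each variable, and there are exactly $q^2 = N$ of them; each is either $(\mathcal H,q)$-good or $(\mathcal H,q)$-bad. By definition, every good monomial lies in $V$, and since the good monomials form a subset of a basis, they are linearly independent. Hence $\dim V \ge \#\set{\text{good monomials}} = N - \#\set{\text{bad monomials}}$. Therefore $K = \dim \mathcal C \ge N - \#\set{\text{bad monomials}}$, and the redundancy satisfies $K^\perp = N - K \le \#\set{\text{bad monomials}}$, as claimed.

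There is no substantial obstacle here; the only point worth care is that this argument yields merely a \emph{lower} bound on $\dim \mathcal C$ — the span of the good monomials could a priori be a proper subcode of $\mathcal C$ — but since the statement only asserts an \emph{upper} bound on the redundancy, this is exactly what is needed. (Whether the bound is tight, i.e.\ whether $\mathcal C$ is spanned by the evaluations of good monomials alone, is a separate matter not required for this observation.)
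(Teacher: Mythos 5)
Your proof is correct and follows the same approach the paper sketches just before the observation: the evaluations of $(\mathcal H,q)$-good monomials lie in $\mathcal C$, are linearly independent (being a subset of the monomial basis mapped bijectively by $\eval_q$), and so lower-bound $\dim\mathcal C$, giving the stated upper bound on the redundancy. You simply spell out the linearity of the wedge-restriction map and the bijection argument in a bit more detail than the paper's one-paragraph remark.
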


Moreover, the number of disjoint repair groups for a wedge-lifted code is straightforward to compute. 

\begin{proposition}
\label{prop:repaircount}
Any $(\mathcal H,q)$ wedge-lifted code $\mathcal{C}$ has $|\mathcal H|$ disjoint repair groups.
\end{proposition}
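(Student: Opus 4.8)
The plan is to read the repair groups directly off the parity checks built into the definition of $\mathcal{C}$. The coordinates of the code are indexed by $\F_q^2$, so fix a coordinate $\vec{p} \in \F_q^2$. As the candidate repair groups for $\vec{p}$ I would take the point sets of the $|\mathcal{H}|$ wedges through $\vec{p}$, one for each $H \in \mathcal{H}$, with the center $\vec{p}$ itself removed; the recovery function attached to each group is simply the sum of its coordinates over $\F_q$.

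The first step is to verify correctness of a single repair group. Writing $S_H$ for the set of points lying on some line of $W_{H,\vec{p}}$ other than $\vec{p}$, and $f_H$ for the $\F_q$-sum of its inputs, I would expand $P|_{W_{H,\vec{p}}} = \sum_{L \in W_{H,\vec{p}}}\sum_{T\in\F_q}P(L(T))$ and observe that $\vec{p}$ lies on all $|H|$ lines while every other point of the wedge lies on exactly one of them; since $|H|$ is odd and $\F_q$ has characteristic $2$, the double sum reduces modulo $2$ to $P(\vec{p}) + \sum_{\vec{x}\in S_H} P(\vec{x})$. Hence the defining relation $P|_{W_{H,\vec{p}}}=0$ becomes $P(\vec{p}) = \sum_{\vec{x}\in S_H}P(\vec{x})$, i.e. $f_H(c|_{S_H}) = c_{\vec{p}}$ for every codeword $c = \eval_q(P) \in \mathcal{C}$. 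This is essentially the observation stated right after Definition~\ref{def:WLC}; it is the one place where a little care about the multiplicity of the center point is needed, and I regard it as the only mildly delicate point of the argument. Also $S_H \subseteq [N]\setminus\{\vec{p}\}$ by construction, and $S_H$ is nonempty since $|S_H| = |H|(q-1) \geq q-1 \geq 1$.

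The second step is disjointness of the $|\mathcal{H}|$ groups, where I would use the hypothesis that $\mathcal{H}$ is a family of pairwise disjoint subsets of $\F_q$. If a point $\vec{x}\neq\vec{p}$ belonged to both $S_H$ and $S_{H'}$ with $H\neq H'$, then $\vec{x}$ would lie on a line through $\vec{p}$ of slope in $H$ and on a line through $\vec{p}$ of slope in $H'$; but two distinct lines through $\vec{p}$ meet only at $\vec{p}$, so these two lines coincide, forcing their common slope into $H\cap H' = \emptyset$, a contradiction. Combining the two steps, for every $\vec{p}$ we have exhibited $|\mathcal{H}|$ pairwise disjoint subsets of $[N]\setminus\{\vec{p}\}$ together with functions that recover $c_{\vec{p}}$, which is exactly the assertion that $\mathcal{C}$ has $|\mathcal{H}|$ disjoint repair groups. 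I do not anticipate any genuine obstacle beyond the parity bookkeeping already flagged in the text.
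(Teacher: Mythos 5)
Your proposal is correct and follows the same route as the paper: take as the $|\mathcal H|$ repair groups the wedges $W_{H,\vec{p}}\setminus\{\vec{p}\}$, use the parity check $P|_{W_{H,\vec{p}}}=0$ together with characteristic two and $|H|$ odd to recover $c_{\vec{p}}$ by summing, and get disjointness from the disjointness of the $H$'s and the fact that distinct lines through $\vec{p}$ meet only at $\vec{p}$. You spell out the multiplicity bookkeeping ($\vec{p}$ on all $|H|$ lines, other wedge points on exactly one) that the paper delegates to the remark following the definition of wedge restriction, but the argument is the same.
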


\begin{proof} For any $\vec{p} \in \F_q^2$ and $H \in \mathcal H$, we can recover $P(\vec{p})$ by summing $P$ along $W_{H,\vec{p}}\setminus \{\vec{p}\}$. This sum gives $P(\vec{p})$ exactly because $\F_q$ is characteristic two. So, each $H$ admits one repair group for any $\vec{p}$, namely $W_{H,\vec{p}}\setminus \{\vec{p}\}$. The fact that the $H \in \mathcal H$ are disjoint gives us that these repair groups are disjoint too, because non-parallel lines intersect at only one point, $\vec{p}$.
\end{proof}

\subsection{\Codenamens{s} via cosets}\label{ssec:cosets}
In this section we analyze \codenamens{s} that arise when $\mathcal H$ is a collection of cosets. We work with cosets because they conveniently partition $\F_q^\times$ into sets of odd size when $q$ is a power of 2, which in turn gives us the $|\mathcal H|$-disjoint repair group property. Moreover, we can make use of the following fact, which simplifies our analysis.
\begin{fact}
\label{fact:sum}
Let $H \leq \F_q^\times$ be a subgroup. For any nonnegative integer $n$, 
\[ \sum_{\alpha \in H} \alpha^n = \begin{cases} |H| & n \equiv 0 \mod |H| \\ 0 & \text{otherwise} \end{cases}. \]
\end{fact}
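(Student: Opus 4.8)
The plan is to evaluate $S \defeq \sum_{\alpha \in H} \alpha^n$ by a standard group-theoretic substitution argument. First I would dispose of the case $n \equiv 0 \pmod{|H|}$: here $\alpha^n = (\alpha^{|H|})^{n/|H|} = 1$ for every $\alpha \in H$ by Lagrange's theorem (the order of every element of $H$ divides $|H|$), so $S = |H|$. Note that since $q$ is a power of $2$, $|H|$ divides $|\F_q^\times| = q - 1$, which is odd, so $|H|$ is odd and $S = |H| \neq 0$ in $\F_q$; this is exactly why cosets give odd-size wedges and why the statement is nonvacuous.

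For the remaining case, suppose $n \not\equiv 0 \pmod{|H|}$. Since $H$ is cyclic (a finite subgroup of the multiplicative group of a field), fix a generator $g$ of $H$, so $g^n \neq 1$. The key step is the reindexing trick: multiplication by $g$ permutes $H$, so
\[
S = \sum_{\alpha \in H} \alpha^n = \sum_{\alpha \in H} (g\alpha)^n = g^n \sum_{\alpha \in H} \alpha^n = g^n S,
\]
whence $(1 - g^n) S = 0$. Because $g^n \neq 1$ and $\F_q$ is a field (hence an integral domain), $1 - g^n$ is a nonzero element and is therefore invertible, forcing $S = 0$.

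I don't anticipate a genuine obstacle here; the only point requiring a little care is making sure the argument is purely formal over $\F_q$ rather than over $\Z$ — in particular that ``$S = |H|$'' is read as the image of $|H|$ under $\Z \to \F_q$, and that we actually use a generator of $H$ (so that $g^n \ne 1 \iff |H| \nmid n$) rather than an arbitrary non-identity element. Everything else is Lagrange's theorem plus the integral-domain property, both of which are standard.
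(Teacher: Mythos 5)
Your proof is correct, and it is the standard argument (Lagrange's theorem for the case $|H| \mid n$, and the reindexing-by-a-generator trick plus the integral-domain property for the other case). The paper states this as a Fact without proof, so there is no alternative approach to compare against; your write-up, including the observation that $|H|$ is odd (hence equal to $1$ in $\F_q$) because it divides the odd number $q-1$, fills the gap cleanly.
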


\begin{lemma}\label{lem:bad-characterization}
Let $H\leq \F_q^\times$ be a subgroup, $\mathcal H$ be the collection of cosets $g H$ of $\F_q^\times$, and $a,b$ be integers such that $0 < a,b\leq q-1$ and $a+b<2(q-1)$. Then, a monomial $X^aY^b$ is $(\mathcal H,q)$-bad if and only if both of the following conditions hold:
\begin{enumerate}
    \item $a\vee b = q-1$
    \item There exists an $i \equiv b \mod |H|$ such that $i \leq_2 a\wedge b$.
\end{enumerate}
\end{lemma}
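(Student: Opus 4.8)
The plan is to compute the wedge restriction $X^aY^b|_{W_{H,\vec p}}$ explicitly as a polynomial in the coordinates $(x,y)$ of $\vec p$, expand it via the binomial theorem, and then determine exactly when it vanishes identically. Concretely, writing $\vec p = (x,y)$, we have
\[
X^aY^b|_{W_{H,\vec p}} = \sum_{\alpha\in H}\sum_{T\in\F_q} T^a\,(\alpha(T-x)+y)^b.
\]
Expanding $(\alpha(T-x)+y)^b = \sum_{k=0}^b \binom{b}{k}\alpha^k (T-x)^k y^{b-k}$ and then $(T-x)^k = \sum_{j=0}^k\binom{k}{j}T^j(-x)^{k-j}$, the sum over $\alpha\in H$ turns the $\alpha^k$ factor into $|H|$ if $k\equiv 0\bmod |H|$ and $0$ otherwise, by Fact~\ref{fact:sum}. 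The sum over $T\in\F_q$ of $T^{a+j}$ is $-1$ (equivalently $|H|$-many copies summing appropriately; in characteristic $2$ it is nonzero) precisely when $a+j\equiv 0 \bmod (q-1)$ and $a + j > 0$, i.e.\ when $a+j = q-1$ given the degree constraints (here is where $0<a,b$ and $a+b<2(q-1)$ are used to pin down $j = q-1-a$). So the restriction is, up to the nonzero constant $|H|$,
\[
\sum_{\substack{k : |H|\,\mid\,k\\ j := q-1-a \le k \le b}} \binom{b}{k}\binom{k}{j}\,(-x)^{k-j}\,y^{b-k},
\]
and this is a nonzero polynomial in $x,y$ if and only if at least one term survives, i.e.\ some coefficient $\binom{b}{k}\binom{k}{j}$ is nonzero mod $2$.

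The next step is to translate "$\binom{b}{k}\binom{k}{j}\neq 0 \bmod 2$ for some valid $k$" into conditions (1) and (2) using Lucas's theorem (Theorem~\ref{thm:lucas}). By Lucas, $\binom{b}{k}\binom{k}{j}$ is odd iff $k \leq_2 b$ and $j\leq_2 k$, where $j = q-1-a$. Note $q-1$ is the all-ones string of length $\ell$, so $j = q-1-a$ has binary representation equal to the bitwise complement of $a$ (this uses $a\le q-1$); thus $j\leq_2 k$ means $k$ contains all the bit positions where $a$ is $0$. Combined with $k\leq_2 b$, we need a $k$ with $j \leq_2 k \leq_2 b$; such $k$ exists iff $j\leq_2 b$, i.e.\ every zero-bit of $a$ is a one-bit of $b$ — which is exactly $a\vee b = q-1$, condition (1). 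Given condition (1), the set of admissible $k$ is exactly $\{\, j \vee s : s \leq_2 (a\wedge b)\,\}$ (the free bits are those where both $a$ and $b$ are $1$), and we additionally need $|H|\mid k$. Writing $k = j \vee s = j + s$ (disjoint supports), and observing $j = q-1-a \equiv -a \equiv b \bmod$ anything relevant... actually more carefully: $q - 1 \equiv 0 \bmod |H|$ since $|H|\mid q-1$, so $j = (q-1) - a \equiv -a \bmod |H|$, and we want $k = j + s \equiv 0$, i.e.\ $s \equiv a \bmod |H|$. Hmm, but condition (2) is phrased as "$i\equiv b\bmod|H|$, $i\leq_2 a\wedge b$" — so I need to reconcile $s\equiv a$ with $i\equiv b$; since $a \equiv -a^{c}\cdots$ — the cleanest route is to substitute $i = (a\wedge b) - s$ or relate $a\bmod|H|$ to $b\bmod|H|$ via $a+b$; I'll need to check the exact bijection between the surviving $s$ and the $i$ in condition (2), likely $i = $ the "other half" so that $s$ and $i$ partition $a\wedge b$ and $s + i = a\wedge b$, giving $s\equiv a \iff i = (a\wedge b)-s \equiv (a\wedge b)-a \equiv b - (a\vee b) \equiv b \bmod |H|$ using $a\wedge b + a \vee b = a+b$ and $a\vee b = q-1\equiv 0$.

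I expect the main obstacle to be the bookkeeping in the last step: correctly matching up the divisibility condition "$|H|$ divides the surviving exponent $k$" with the stated form of condition (2) in terms of $i\equiv b\bmod|H|$ and $i\leq_2 a\wedge b$, and making sure the bijection $s\leftrightarrow i$ (some complementation within the bit positions of $a\wedge b$) is exactly right, including the arithmetic identity $a\vee b = q-1 \equiv 0\bmod|H|$ that converts $s\equiv a$ into $i\equiv b$. A secondary technical point to get right is the claim that $\sum_{T\in\F_q}T^{a+j}\neq 0$ exactly when $a+j=q-1$ under our degree bounds: one must rule out $a+j=0$ (impossible since $a>0$) and $a+j=2(q-1)$ (ruled out by $a+b<2(q-1)$ together with $j\le b$), so that $a+j\equiv 0\bmod(q-1)$ with $0< a+j < 2(q-1)$ forces $a+j=q-1$. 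Everything else — the two binomial expansions, the Fact~\ref{fact:sum} application, and the reduction to a single monomial $(-x)^{k-j}y^{b-k}$ per valid $k$ (these are distinct monomials for distinct $k$, so no cancellation across terms) — is routine.
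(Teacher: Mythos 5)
Your argument is correct and relies on the same toolkit the paper uses: explicit binomial expansion of the wedge restriction, the power-sum fact for $\sum_{T\in\F_q}T^m$, the subgroup power-sum fact (Fact~\ref{fact:sum}), Lucas's theorem (Theorem~\ref{thm:lucas}), and the vanishing criterion for polynomials of degree $<q$ in each variable over $\F_q$. The only real difference is the order of operations. The paper applies the $T$-sum first, which collapses the outer binomial index to $q-1-a$; it then binomial-expands $(y-g\alpha x)^{a+b-(q-1)}$ and applies Fact~\ref{fact:sum} to that inner expansion, so the surviving indices satisfy $i\equiv b\bmod |H|$ and $i\leq_2 a\wedge b$, i.e.\ exactly condition~(2), with no further translation needed. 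You instead expand both binomials up front, apply Fact~\ref{fact:sum} to force $|H|\mid k$, and then apply the $T$-sum, landing on the condition that there exist $k$ with $|H|\mid k$ and $q-1-a\leq_2 k\leq_2 b$; converting this into condition~(2) requires the bit-complement bijection $s\mapsto i=(a\wedge b)-s$ within the support of $a\wedge b$ and the congruence $(a\wedge b)-a\equiv b\pmod{|H|}$, which uses $a+b=(a\wedge b)+(a\vee b)$ and $a\vee b=q-1\equiv 0\bmod|H|$. You carry this bijection out correctly, and you also correctly flag the degree bookkeeping ($0<a+j<2(q-1)$ forces $a+j=q-1$) and the no-cancellation observation (distinct $k$ give distinct monomials $x^{k-j}y^{b-k}$). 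Two minor points worth tightening in a final write-up: (i) the wedge in the definition of badness is over a coset $gH$, not $H$ itself, but since $\sum_{\alpha\in gH}\alpha^k=g^k\sum_{\beta\in H}\beta^k$ and $g\neq 0$, the extra $g^k$ factors do not affect which coefficients vanish, so your computation over $H$ alone is sufficient; (ii) in characteristic $2$ the sign $(-x)^{k-j}$ is just $x^{k-j}$ and $|H|$ is odd (so nonzero in $\F_q$), both of which you implicitly use. The paper's ordering is cleaner because the subgroup sum hits the exponent $b-i$ directly, making the residue class of condition~(2) immediate; your route lands in the same place at the cost of the extra conversion step.
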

\begin{proof}
Say that $\vec{p} = (x,y)$ for any $x,y \in \F_q$, we aim to show that $P|_{W_{g H, \vec{p}}} = 0$ for any choice of $x,y$ if and only if the two conditions hold. We begin by simplifying $P|_{W_{g H, \vec{p}}}$ in the case that $g H$ is a coset of a subgroup $H \leq F_q^\times$.

\begin{equation}\label{eq:expansion}
    P|_{W_{g H, \vec{p}}} = \sum_{\alpha \in gH}\sum_{T \in \F_q}\sum_{i=0}^b \binom{b}{i} \alpha^i(y-\alpha x)^{b-i} T^{a+i} = \sum_{\alpha \in H}\sum_{T \in \F_q}\sum_{i=0}^b \binom{b}{i} (g\alpha)^i(y-g\alpha x)^{b-i} T^{a+i}
\end{equation}
First we handle the case where $a+b < 2q-2$. With this assumption and the fact that $\sum_{T \in \F_q} T^{a+i} = 1$ if $a+i = q-1$ and $0$ otherwise, we can simplify the above sum.
\[ P|_{W_{g H, \vec{p}}} = \sum_{\alpha \in H} \binom{b}{q-1-a} (g\alpha)^{-a}(y-g\alpha x)^{b+a-(q-1)} \]

Theorem \ref{thm:lucas} 
states that $\binom{b}{q-1-a} = 1$ if and only if $q-1-a \leq_2 b$, which is equivalent to $a\vee b = q-1$. So, if $a\vee b \neq q-1$, $P|_{W_{g H, \vec{p}}} = 0$, so $P$ is good. 

It remains to show that if $a\vee b = q-1$, then $P|_{W_{\alpha H, \vec{p}}} = 0$ if and only if there exists an $i \equiv b \mod |H|$ such that $i \leq_2 a\wedge b$. We then expand the binomial term in the sum of the above expression.
\begin{align*}
    P|_{W_{g H, \vec{p}}} &= \sum_{\alpha \in H} \binom{b}{q-1-a} (g\alpha)^{-a}(y-g\alpha x)^{a+b-(q-1)} \\
    &= \sum_{\alpha \in H} \sum_{i=0}^{b+a-(q-1)}\binom{a+b-(q-1)}{i}y^ix^{a+b-(q-1)-i}(g\alpha)^{b-i}\\
    &=  \sum_{i=0}^{b+a-(q-1)}\binom{a+b-(q-1)}{i}y^ix^{a+b-(q-1)-i}g^{b-i}\sum_{\alpha \in H} \alpha^{b-i}\\
    &=  \sum_{\substack{0 \leq i \leq b+a-(q-1)\\i \equiv b \mod |H|} }^{b+a-(q-1)}\binom{a+b-(q-1)}{i}y^ix^{a+b-(q-1)-i}g^{b-i}
\end{align*}
We use Fact \ref{fact:sum} to simplify from the third line to the fourth. Then, if we view this as a bivariate polynomial in $x,y$, we see that its total degree is at most $q-1$, so it is $0$ for every $x,y \in \F_q$ if and only if every coefficient of $x$ and $y$ is $0$.
Therefore, $P$ is bad if and only if $g^{b-i}\binom{a+b-(q-1)}{i} \neq 0$ for some $i \equiv b \mod |H|$. Because $a\vee b = q-1$ in this case, we can compute that $a+b-(q-1) = a\wedge b$. Moreover, $g\neq 0$, so applying Theorem \ref{thm:lucas} to $\binom{a\wedge b}{i}$ gives us the following equivalence.
\[g^{b-i}\binom{a+b-(q-1)}{i} \neq 0 \Longleftrightarrow i \leq_2 a\wedge b,\]
which immediately yields the lemma statement.
\end{proof}

The method presented in the proof of Lemma \ref{lem:bad-characterization} fails to work in the case where $a=b=q-1$ because the value of $P|_{W_{g H, \vec{p}}}$ becomes the sum of the coefficients of $T^{q-1}$ and $T^{2q-2}$. So, we cover this case separately.

\begin{lemma}\label{rem:bad-special-case}
Let $\mathcal H$ be a collection of cosets $g H$ of $\F_q^\times$. Then, $X^{q-1}Y^{q-1}$ is $(\mathcal H,q)-$bad.
\end{lemma}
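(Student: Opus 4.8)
The plan is to compute $P|_{W_{gH,\vec p}}$ directly for $P(X,Y)=X^{q-1}Y^{q-1}$, following the same expansion as in the proof of Lemma~\ref{lem:bad-characterization}, but being careful about the step that breaks: here $a+b=2(q-1)=2q-2$, so the inner sum $\sum_{T\in\F_q}T^{a+i}$ is nonzero not only when $a+i=q-1$ (impossible, since $a=q-1$ and $i\ge 1$) but also when $a+i=2q-2$, i.e. $i=q-1=b$. Concretely, writing $(y-\alpha x)^{q-1}=\sum_{i=0}^{q-1}\binom{q-1}{i}(-1)^{\cdots}y^i x^{q-1-i}\alpha^{q-1-i}$ (all signs are $+1$ in characteristic $2$) and using $\sum_{\alpha\in gH}\alpha^{(q-1)+?}$, the only surviving term of the $T$-sum is the $i=q-1$ term, which contributes $T^{2q-2}$ with coefficient $1$. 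So the expansion collapses to
\begin{align*}
P|_{W_{gH,\vec p}} &= \sum_{\alpha\in gH}\binom{q-1}{q-1}\,\alpha^{q-1}(y-\alpha x)^{0} = \sum_{\alpha\in gH}\alpha^{q-1} = \sum_{\alpha\in H}(g\alpha)^{q-1} = g^{q-1}\sum_{\alpha\in H}\alpha^{q-1}.
\end{align*}

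Next I would evaluate $\sum_{\alpha\in H}\alpha^{q-1}$. Since $H\le\F_q^\times$ and every $\alpha\in\F_q^\times$ satisfies $\alpha^{q-1}=1$, this sum equals $|H|$, which is odd (as $|H|$ divides $q-1=2^\ell-1$, an odd number), hence nonzero in $\F_q$. Also $g\ne 0$, so $g^{q-1}=1$. Therefore $P|_{W_{gH,\vec p}}=|H|\ne 0$, independent of $\vec p$; in particular there is a wedge on which $P$ does not restrict to the zero polynomial, so $X^{q-1}Y^{q-1}$ is $(\mathcal H,q)$-bad.

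I do not expect any serious obstacle; this is essentially a boundary-case recomputation of Lemma~\ref{lem:bad-characterization}. The one point requiring care is the $\sum_{T\in\F_q}T^{a+i}$ step: with $a+b=2q-2$ one must track both the $T^{q-1}$ and the $T^{2q-2}$ coefficients rather than just $T^{q-1}$, and verify that in this particular monomial the $T^{q-1}$ contribution cannot occur (it would need $i=0$ and $a=q-1$, but then the inner binomial sum index forces $i\ge$ something — here simply $a+i=q-1$ has no solution with $i\in\{0,\dots,b\}$ since $a=q-1$ already), leaving only the clean term computed above. Everything else — the characteristic-$2$ sign bookkeeping, $\alpha^{q-1}=1$, oddness of $|H|$ — is routine.
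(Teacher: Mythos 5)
Your computation drops the $i=0$ term of the $T$-sum, and that term does not vanish. In the expansion
\[
P|_{W_{gH,\vec{p}}} \;=\; \sum_{\beta\in gH}\sum_{T\in\F_q} \sum_{i=0}^{q-1}\binom{q-1}{i}\beta^i(y-\beta x)^{q-1-i}\,T^{q-1+i},
\]
the inner sum $\sum_{T}T^{q-1+i}$ is nonzero for \emph{both} $i=0$ (giving $T^{q-1}$) and $i=q-1$ (giving $T^{2q-2}$). Your parenthetical claim that ``$a+i=q-1$ has no solution with $i\in\{0,\dots,b\}$ since $a=q-1$ already'' is simply false: $i=0$ is such a solution. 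Retaining both surviving terms yields
\[
P|_{W_{gH,\vec{p}}} \;=\; \sum_{\beta\in gH}\left[(y-\beta x)^{q-1} + \beta^{q-1}\right] \;=\; \sum_{\beta\in gH}\left[(y-\beta x)^{q-1} + 1\right],
\]
which genuinely depends on $\vec{p}$ and vanishes for most choices: whenever $x\neq 0$ and $y/x\notin gH$, every summand is $1+1=0$. So your conclusion that $P|_{W_{gH,\vec{p}}}=|H|\neq 0$ independently of $\vec{p}$ is not correct, and the proof as written does not establish badness.

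The lemma is still true, but you must pick a \emph{specific} $\vec{p}$ at which the sum is nonzero, which is exactly the structure of the paper's argument. The cleanest choice is $\vec{p}=(0,0)$: then $(y-\beta x)^{q-1}=0$ for every $\beta\in gH$, so $P|_{W_{gH,(0,0)}}=\sum_{\beta\in gH}1=|H|$, which is odd (as $|H|$ divides $q-1=2^\ell-1$) and hence nonzero in $\F_q$. Your observations that $|H|$ is odd and that $\beta^{q-1}=1$ are correct and are exactly what is needed once a good $\vec{p}$ is fixed. (As a side remark, the paper's printed choice $\vec{p}=(g^{-1},0)$ actually makes every summand vanish --- with the paper's parametrization $\alpha\in H$ and slope $g\alpha$, one gets $y-g\alpha x=-\alpha\neq 0$, so $(y-g\alpha x)^{q-1}=1$ and each summand is $1+1=0$; it should read $y=1$, after which the $\alpha=1$ summand equals $1$ and all others vanish, matching the sentence that follows in the paper.)
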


\begin{proof}
Let $(x,y)=\vec{p} \in \F_q^2$ be any vector. We can simplify and expand equation \eqref{eq:expansion} where $a=b=q-1$, once again using the fact that $\sum_{T \in \F_q} T^{a+i} = 1$ if $a+i = q-1$ and $0$ otherwise and that $\alpha,g \neq 0$.
\[P|_{W_{g H, \vec{p}}} = \sum_{\alpha \in H} 1+(y-g\alpha x)^{q-1}\]
Then, if we pick $x = g^{-1}$ and $y = 0$, we see that $1+(y-g\alpha x)^{q-1} = 0$ if $\alpha \neq 1$ and $1$ if $\alpha = 1$. Therefore, because we are in characteristic $2$, this sum is $1$, so $X^{q-1}Y^{q-1}$ is $(\mathcal H, q)$-bad because there exists a choice of $\vec{p}$ where $P|_{W_{g H, \vec{p}}} \neq 0$.
\end{proof}

Then, because $a,b=q-1$ satisfies both conditions of Lemma \ref{lem:bad-characterization}, its conclusion holds for all choices of $a,b$.

\begin{corollary} \label{cor:bad-char-first-half}
Let $H\leq \F_q^\times$ be a subgroup and let $\mathcal H$ be the collection of cosets $g H$ of $\F_q^\times$. Then, for $a,b \leq q-1$, a monomial $X^aY^b$ is $(\mathcal H,q)$-bad if and only if both of the following conditions hold:
\begin{enumerate}
    \item $a\vee b = q-1$
    \item There exists an $i \equiv b \mod |H|$ such that $i \leq_2 a\wedge b$.
\end{enumerate}
\end{corollary}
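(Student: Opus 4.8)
The plan is to assemble the corollary from the two results just established --- Lemma~\ref{lem:bad-characterization} and Lemma~\ref{rem:bad-special-case} --- after checking that together they decide the biconditional for every pair $(a,b)$ with $0 \le a,b \le q-1$. Among pairs with $a,b \ge 1$, the only one outside the hypotheses of Lemma~\ref{lem:bad-characterization} (which requires $a+b < 2(q-1)$) is $a = b = q-1$. For that pair, Lemma~\ref{rem:bad-special-case} says $X^{q-1}Y^{q-1}$ is $(\mathcal H,q)$-bad, so it suffices to verify that conditions~1 and~2 of the corollary both hold when $a = b = q-1$: condition~1 is immediate from $(q-1)\vee(q-1) = q-1$, and condition~2 holds with the witness $i = q-1$, which trivially satisfies $i \equiv q-1 \pmod{|H|}$ and $i \leq_2 (q-1)\wedge(q-1) = q-1$. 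This is precisely the observation stated just before the corollary, and it settles all $(a,b)$ with $a,b \ge 1$.

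It remains to handle the boundary pairs with $a = 0$ or $b = 0$, which are not covered by Lemma~\ref{lem:bad-characterization} and which I would dispose of by a direct computation of $P|_{W_{gH,\vec{p}}}$. If $b = 0$, then $P|_{W_{gH,\vec{p}}} = \sum_{\alpha\in gH}\sum_{T\in\F_q}T^a = |gH|\sum_{T\in\F_q}T^a$; since $\sum_{T\in\F_q}T^a$ vanishes unless $a$ is a positive multiple of $q-1$, and $0 \le a \le q-1$, this is nonzero exactly when $a = q-1$, in which case it equals $-|H|$, which is nonzero in $\F_q$ because $|H|$ divides $q-1$ and is hence odd. So $X^aY^0$ is bad iff $a = q-1$, and conditions~1 and~2 say the same thing here: condition~1 reads $a\vee 0 = q-1$, i.e.\ $a = q-1$, and when $a = q-1$ condition~2 holds with $i = 0$ (as $0 \leq_2 a\wedge 0 = 0$ and $0 \equiv 0 \pmod{|H|}$). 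The case $a = 0$ is analogous: $P|_{W_{gH,\vec{p}}} = \sum_{\alpha\in gH}\sum_{T\in\F_q}(\alpha(T-x)+y)^b$ is, for each $\alpha$, a polynomial in $T$ of degree at most $b \le q-1$, so the inner sum over $T$ vanishes unless $b = q-1$, and when $b = q-1$ the coefficient of $T^{q-1}$ equals $\alpha^{q-1} = 1$, so the total is $\sum_{\alpha\in gH}(-1) = -|H| \ne 0$; thus $X^0Y^b$ is bad iff $b = q-1$, matching conditions~1 and~2 (condition~1 forces $b = q-1$, and then condition~2 holds with $i = 0$ because $|H| \mid (q-1)$ gives $0 \equiv q-1 \pmod{|H|}$).

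I do not expect a real obstacle --- the substance is entirely in the two preceding lemmas --- but two small points deserve attention. First, the corollary's range $0 \le a,b \le q-1$ strictly contains that of Lemma~\ref{lem:bad-characterization}, so the boundary pairs $a=0$ and $b=0$ genuinely need the short direct arguments above rather than a citation. Second, the divisibility $|H| \mid (q-1)$ is used repeatedly and in two guises: to conclude $|H|$ is odd (so that $-|H| \ne 0$ in characteristic $2$ in the non-vanishing computations), and to get the congruence $0 \equiv q-1 \pmod{|H|}$ that makes condition~2 come out correctly at the corner $a = 0,\ b = q-1$ (and symmetrically). Combining these boundary cases and the $a=b=q-1$ case with Lemma~\ref{lem:bad-characterization} then yields the biconditional for all $0 \le a,b \le q-1$.
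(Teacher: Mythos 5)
Your proof is correct and takes the same route as the paper: the corollary is assembled from Lemma~\ref{lem:bad-characterization} and Lemma~\ref{rem:bad-special-case}, with the observation that $(a,b)=(q-1,q-1)$ satisfies both conditions. The one place you go further than the paper's very terse remark is the boundary cases $a=0$ or $b=0$, which fall outside the stated hypothesis $0<a,b$ of Lemma~\ref{lem:bad-characterization} and which the paper does not explicitly address even though later uses (Lemma~\ref{lem:equiv-bad-char}, Corollary~\ref{cor:exact-count}) require the range $0\le a,b\le q-1$. Your direct computations for those cases are correct, and you correctly identify the two facts they lean on: $|H|$ is odd because $|H|\mid q-1=2^{\ell}-1$, so the nonvanishing sum $|H|\cdot 1$ is nonzero in characteristic $2$; and $|H|\mid q-1$ gives $0\equiv q-1\pmod{|H|}$, which is what makes condition~2 come out right at the corners $(0,q-1)$ and $(q-1,0)$. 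This is a slightly more careful write-up of the same argument.
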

We can immediately apply this result with a naive bound on the number of $a,b$ such that those two conditions hold. This demonstrates that wedge-lifted codes are no worse than the non-algebraic constructions of \cite{FVY15, BE16}.
\begin{corollary}
  Let $H\le \mathbb{F}_q^\times$ be a subgroup and let $\mathcal{H}$ be the collection of cosets $gH$ of $\mathbb{F}_q^\times$.
  Then, for $t=(q-1)/|H|$, the $(\mathcal{H},q)$ wedge-lifted code has the $t$-DRGP and redundancy at most $t\sqrt{N}$.
\label{cor:fvy-0}
\end{corollary}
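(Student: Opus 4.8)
The plan is to prove the two assertions separately. The $t$-DRGP is essentially immediate: since $q = 2^\ell$, the order $q-1$ of $\F_q^\times$ is odd, so every subgroup $H \le \F_q^\times$ has odd order, and the cosets $gH$ partition $\F_q^\times$ into exactly $[\F_q^\times : H] = (q-1)/|H| = t$ pairwise disjoint sets, each of odd size. Hence $\mathcal{H}$ is a legal choice in Definition~\ref{def:WLC} with $|\mathcal{H}| = t$, and Proposition~\ref{prop:repaircount} gives that the $(\mathcal{H},q)$ wedge-lifted code has $t$ disjoint repair groups.

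For the redundancy bound I would invoke Observation~\ref{lem:rate}, reducing the problem to showing that there are at most $t\sqrt{N} = tq$ monomials $X^aY^b$ with $0 \le a,b \le q-1$ that are $(\mathcal{H},q)$-bad, and then apply the characterization in Corollary~\ref{cor:bad-char-first-half}: $X^aY^b$ is bad iff (1) $a \vee b = q-1$ and (2) there is an $i \equiv b \pmod{|H|}$ with $i \le_2 a \wedge b$. The main reformulation I would use is to rewrite (2) in terms of the ``multiple-of-$|H|$ part'' of $b$. Writing $i \le_2 a \wedge b \le_2 b$ and setting $m := b - i$, and using that $b - i = b \setminus i$ as bit-sets when $i \le_2 b$, condition (2) is equivalent to the existence of an integer $m$ with $|H| \mid m$ and $\overline{a} \le_2 m \le_2 b$, where $\overline{a} := (q-1) - a$ is the bitwise complement of $a$; here one also uses that condition (1) forces $\overline{a} \le_2 b$, so that the bottom of this interval lies below $b$.

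With this reformulation the count becomes a clean double-counting. I would charge each bad monomial $(a,b)$ to a pair $(a,m)$, where $m$ is a witnessing multiple of $|H|$ as above (say, the largest). For a fixed $m$ with $|H| \mid m$, the constraint $\overline{a} \le_2 m$ is equivalent to $\overline{m} \le_2 a$, leaving $2^{\wt{m}}$ admissible values of $a$; and for fixed $(a,m)$, the constraint $m \le_2 b$ (which also implies $\overline{a} \le_2 b$) leaves $2^{\ell - \wt{m}}$ values of $b$. Summing, the number of bad monomials is at most $\sum_{m} 2^{\wt{m}} \cdot 2^{\ell - \wt{m}} = q \cdot \#\{m : |H| \mid m,\ 0 \le m \le q-1\}$, and since $|H| \mid q-1$ these multiples are $0, |H|, \dots, (t-1)|H|, q-1$. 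The only delicate point is that this naively yields $t+1$ values of $m$; the term $m=0$ forces $a = q-1$ and contributes only monomials $X^{q-1}Y^b$, so a small separate argument (or absorbing this lower-order term into the count) brings the bound down to $tq = t\sqrt{N}$ in the regime of interest.

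I expect the main obstacle to be purely the bit-set bookkeeping: correctly translating between the arithmetic conditions ($i \equiv b \bmod |H|$, divisibility by $|H|$) and the combinatorial ones ($\le_2$, $\vee$, $\wedge$, complementation), in particular checking that $b - i = b \setminus i$ when $i \le_2 b$, that $\overline{a} \le_2 b$ under condition (1), and that $\overline{a} \le_2 m \iff \overline{m} \le_2 a$, so that the charging map is well-defined and its fibers have the claimed sizes $2^{\wt{m}}$ and $2^{\ell-\wt{m}}$. There is no hard idea beyond Lucas's theorem (already used to obtain Corollary~\ref{cor:bad-char-first-half}) and the careful union/charging count; the $m=0$ edge case is the only place where one must be slightly careful to land on $t\sqrt{N}$ rather than a constant factor more.
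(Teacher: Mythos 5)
Your proposal is correct and takes essentially the same route as the paper's proof: both fix the multiple $m = b - i$ of $|H|$, invoke Corollary~\ref{cor:bad-char-first-half}, and show each fixed $m$ contributes at most $q$ admissible pairs $(a,b)$ — your factorization into $2^{\wt{m}}$ choices of $a$ (from $\bar{a}\leq_2 m$, i.e.\ $\bar{m}\leq_2 a$) times $2^{\ell-\wt{m}}$ choices of $b$ (from $m\leq_2 b$) is exactly the paper's per-bit count reorganized. You are in fact a bit more careful than the paper on the one delicate spot: there are $t+1$ multiples of $|H|$ in $\{0,\dots,q-1\}$ (both $0$ and $q-1$ are included), not $t$ as the paper asserts, so the naive sum is $(t+1)q$; your observation that $m=0$ forces $a=q-1$ and hence contributes only monomials already witnessed by other $m$'s is the kind of argument the paper silently elides to land on $t\sqrt{N}$.
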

\begin{proof}
    We bound the number of $(\mathcal H, q)$-bad monomials $X^aY^b$ by fixing $b-i \equiv 0 \mod |H|$ and bounding the number of $a,b$ that satisfy $i \leq_2 a\wedge b$ and $a\vee b = q-1$. Using \ref{cor:bad-char-first-half} gives an upper bound for the number of possible bad monomials $X^aY^b$, and even counts some monomials multiple times for different choices of $i$.
    
    Because $b \leq q-1$, there are $(q-1)/|H|$ different choices of $b-i \equiv 0 \mod |H|$. Let $q=2^\ell$ and write $b-i = \overline{x_{\ell-1}\dotsm x_0}$ and $a,b,i$ similarly, indexed by $a_j,b_j,i_j$. Then, because $i \leq_2 b$, $x_j = b_j - i_j$. There are two possibilities for each $x_j$, and we count the number of choices of $a_j,b_j$ in each case. First, if $x_j = 1$, then it must be the case that $b_j = 1, i_j = 0$, so $a_j$ can be either $0,$ or $1$. If $x_j = 0$, then it must be the case that $b_j = i_j$, and in either case $a_j = 1$ is forced because of $a\vee b = q-1$ and $i \leq_2 a$. So, in either case of $x_j$, we have two choices for $(a_j,b_j)$, so for any $b-i \equiv 0 \mod |H|$ there are $2^\ell=q$ choices of $a,b$ that satisfy $a\vee b = q-1$ and $i \leq_2 a\wedge  b$. 
    
    Therefore, because there are $(q-1)/|H|$ choices for $b-i$, there are at most $q(q-1)/|H|$ bad monomials. Observation \ref{prop:repaircount} tells us that there are $t = (q-1)/|H|$ disjoint repair groups, and because the code has length $N=q^2$, Observation \ref{lem:rate} gives us $t\sqrt{N}$ redundancy.
\end{proof}
\begin{corollary}
  \label{cor:fvy}
  For all $\alpha\in(0,1/2)$, there exists infinitely many $N=q^2$ such that there exists an $(\mathcal{H},q)$ wedge-lifted code with the $t$-DRGP for $t=N^{\alpha+o(1)}$, and redundancy at most $t\sqrt{N}$.
\end{corollary}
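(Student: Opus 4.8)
The plan is to derive the statement from Corollary~\ref{cor:fvy-0}, which for any subgroup $H\le\F_q^\times$ produces a wedge-lifted code with the $t$-DRGP and redundancy at most $t\sqrt N$, where $t=(q-1)/|H|$ and $N=q^2$. Hence it suffices to show: for every $\alpha\in(0,1/2)$ there are infinitely many $q=2^\ell$ admitting a subgroup $H\le\F_q^\times$ of index $t=(q-1)/|H|$ with $t=N^{\alpha+o(1)}=2^{2\alpha\ell+o(\ell)}$. Because $\F_q^\times$ is cyclic of order $2^\ell-1$, a subgroup of index $t$ exists exactly when $t\mid 2^\ell-1$, so the problem reduces to a number-theoretic one: for each $\alpha\in(0,1/2)$, find infinitely many $\ell$ for which $2^\ell-1$ has a divisor $t$ with $\log_2 t=2\alpha\ell+o(\ell)$.

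The main obstacle is controlling the logarithmic size of a divisor of $2^\ell-1$: the obvious divisors $2^a-1$ with $a\mid\ell$ have $\log_2(2^a-1)\approx a$ with $a/\ell$ a reciprocal of an integer, which would only realize $\alpha\in\{1/4,1/6,1/8,\dots\}$, whereas we need divisors of every intermediate logarithmic size. I would get around this by taking $\ell=2^k$ and using the Fermat factorization
\[
2^{2^k}-1=\prod_{j=0}^{k-1}\bigl(2^{2^j}+1\bigr)=\prod_{j=0}^{k-1}F_j ,
\]
whose factors $F_j=2^{2^j}+1$ are pairwise coprime and satisfy $\log_2 F_j-2^j=\log_2(1+2^{-2^j})\ge 0$, with $\sum_{j\ge 0}\log_2(1+2^{-2^j})$ convergent and hence bounded by an absolute constant $C$. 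Consequently, for any subset $S\subseteq\{0,\dots,k-1\}$ the product $t_S:=\prod_{j\in S}F_j$ is a divisor of $2^{2^k}-1$ with $\log_2 t_S=\sum_{j\in S}2^j+\theta$ for some $0\le\theta\le C$.

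To finish, fix $\alpha\in(0,1/2)$, let $\ell=2^k$ for large $k$, and set $m=\lfloor 2\alpha\,2^k\rfloor$, which lies in $\{1,\dots,2^k-1\}$ once $k$ is large (using $0<2\alpha<1$). Writing $m$ in binary picks out $S\subseteq\{0,\dots,k-1\}$ with $\sum_{j\in S}2^j=m$, so the divisor $t:=t_S$ of $2^\ell-1$ satisfies $\log_2 t=2\alpha\ell+O(1)$, i.e.\ $t=2^{2\alpha\ell}\cdot 2^{O(1)}=N^{\alpha}\cdot 2^{O(1)}=N^{\alpha+o(1)}$ since $\log_2 N=2\ell\to\infty$. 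Let $H\le\F_q^\times$ be the subgroup of order $(q-1)/t$ — well-defined as $t\mid q-1$, with order dividing the odd number $q-1=2^\ell-1$ and therefore odd — and let $\mathcal H$ be its coset decomposition; Corollary~\ref{cor:fvy-0} then gives an $(\mathcal H,q)$ wedge-lifted code with the $t$-DRGP and redundancy at most $t\sqrt N$. Letting $k\to\infty$ produces infinitely many distinct $N=q^2=2^{2^{k+1}}$, which completes the argument. The only remaining work is the two elementary estimates on $\log_2 F_j$ and the routine bookkeeping converting $\log_2 t=2\alpha\ell+O(1)$ into the exponent $\alpha+o(1)$.
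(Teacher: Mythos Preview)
Your argument is correct and follows essentially the same route as the paper: both proofs reduce to producing divisors of $2^\ell-1$ of prescribed logarithmic size by taking products of pairwise coprime Fermat-type factors $2^{m}+1$, and then invoke Corollary~\ref{cor:fvy-0}. The paper fixes a dyadic $\alpha=a'/2^{b'}$, sets $q=2^{2^b n}$, and uses the factors $2^{2^i n}+1$ selected by the binary digits of $a=2^b(1-2\alpha)$, then handles general $\alpha$ by a density argument with dyadic approximations; your version instead always takes $\ell=2^k$, uses the standard Fermat factorization $2^{2^k}-1=\prod_{j<k}(2^{2^j}+1)$, and reads the subset $S$ directly from the binary expansion of $\lfloor 2\alpha\,2^k\rfloor$, which handles all $\alpha\in(0,1/2)$ uniformly and avoids the dyadic/non-dyadic case split.
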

\begin{proof}
  We first prove when $\alpha\in(0,1/2)$ is a dyadic rational number, i.e., a rational of the form $a'/2^{b'}$ for integers $a',b'$.
  Let $a/2^b = 1-2\alpha$.
  Let $a=\overline{a_{b-1}\dots a_0}$, and $n$ be an arbitrary positive integer.
  Let $q=2^{2^bn}$ and $h=\prod_{i:a_i=1}^{} (2^{2^in}+1) = \Theta(2^{an})$.
  In this way, $h | q-1$, so there exists a subgroup $H\le \mathbb{F}_q^\times $ of size $h$.
  By Corollary~\ref{cor:fvy-0}, the $(\mathcal{H},q)$ wedge-lifted code has the $t$-DRGP with redundancy at most $t\sqrt{N}$, for $t = (q-1)/h = \Theta(2^{2^bn-an}) = \Theta(q^{2\alpha}) =  \Theta(N^{\alpha})$, as desired.
  When $\alpha$ is not a dyadic rational, take an approximation of $\alpha$ with dyadic rationals $\alpha_1,\alpha_2,\dots,$ with $\lim_{i\to\infty}\alpha_i=\alpha$, which is possible since dyadic rationals are dense in the reals, and consider the above constructions for parameter $\alpha_i$ with increasing values of $N$.
\end{proof}

\subsection{Instantiations}\label{ssec:instantiations}
In this section, we continue to assume that $\mathcal H$ is a collection of cosets $gH$, examining the special case where $\ell = \ell'\ddd$ for some natural numbers $\ddd$ and  $\ell'$, and each coset in $\mathcal H$ has order $|H| = (q - 1)/ ({q}^{1/\ddd} - 1)$. Under these conditions, we are able to give a precise description of the monomials that fail to satisfy Corollary \ref{cor:bad-char-first-half}.

We make use of two observations. The first is that finding the difference between the integers $a$ and $b$, where $a$ lies in the 2-shadow of $b$, is as simple as taking the difference between corresponding bits.
\begin{observation}
Let $a$ and $b$ be nonnegative integers with binary representations $\overline{a_{\ell-1}\dotsm a_0}$ and  $\overline{b_{\ell-1}\dotsm b_0}$ respectively. If $a \leq_2 b$, then the integer $c = b - a$ has binary representation  $\overline{c_{\ell-1}\dotsm c_0}$  where $c_i = b_i - a_i$ for every integer $i \in \{0, \dots, \ell - 1\}$.
\end{observation}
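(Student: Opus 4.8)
The plan is to verify directly that the bit-string $\overline{c_{\ell-1}\dotsm c_0}$ defined by $c_i = b_i - a_i$ is simultaneously a legitimate binary representation and equal to $b - a$ as an integer. First I would unpack the hypothesis: $a \leq_2 b$ means precisely that $a_i \leq b_i$ for every $i \in \{0,\dots,\ell-1\}$, so each $c_i := b_i - a_i$ lies in $\{0,1\}$. Consequently $c := \sum_{i=0}^{\ell-1} c_i 2^i$ is a well-defined nonnegative integer whose (leading-zero-padded) binary representation is exactly $\overline{c_{\ell-1}\dotsm c_0}$; there is no ambiguity here precisely because every $c_i$ is a genuine bit.

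Next I would show $c + a = b$. The cleanest route is to add $c$ and $a$ column by column in binary: the $i$-th column contributes $c_i + a_i = (b_i - a_i) + a_i = b_i \in \{0,1\}$, so no carry is ever generated into column $i+1$. Hence the binary addition returns exactly the string $\overline{b_{\ell-1}\dotsm b_0}$, i.e.\ $c + a = b$ as integers, so $c = b - a$. Since the binary representation of $c$ has digits $c_i = b_i - a_i$, this is the claimed statement. Equivalently, one can run the argument purely arithmetically: $c + a = \sum_i (b_i - a_i)2^i + \sum_i a_i 2^i = \sum_i b_i 2^i = b$.

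I do not expect a genuine obstacle here. The one point that needs care — and the only place the hypothesis $a \leq_2 b$ is actually used — is the absence of carries in the addition $c + a$: without $a_i \leq b_i$ the quantity $\sum_i(b_i-a_i)2^i$ need not have $\{0,1\}$-digits (indeed $b - a$ could be negative), and then the statement would be false. Once one notes that $c_i \in \{0,1\}$ and $c_i + a_i \leq 1$, the rest is immediate.
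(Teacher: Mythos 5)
Your proof is correct, and since the paper states this observation without proof (it is treated as immediate), your argument is exactly the natural justification the authors leave implicit: use $a \leq_2 b$ to conclude each $c_i = b_i - a_i \in \{0,1\}$, then note $\sum_i c_i 2^i + \sum_i a_i 2^i = \sum_i b_i 2^i$, so $c = b - a$ with no carries.
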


The second is that multiples of $(q - 1)/ (q^{1/\ddd} - 1)$ that are less than $q$ have repeating bit patterns.
\begin{observation}
Let $q=2^{\ell'\ddd}$ for some natural numbers $\ell'$ and $d$. Then any nonnegative integer $a < q$ with binary representation $\overline{a_{\ell'\ddd-1}\dotsm a_0}$ is a multiple of $(q - 1)/ (q^{1/\ddd} - 1)$ if and only if $\overline{a_{\ell' -1}\dotsm a_{0}} = \overline{a_{2\ell' -1}\dotsm a_{\ell'}} = \dots = \overline{a_{\ddd \ell' - 1}\dotsm a_{(\ddd -1)\ell'}}$.
\end{observation}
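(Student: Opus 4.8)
The plan is to pin down the divisor $(q-1)/(q^{1/\ddd}-1)$ explicitly in binary and then translate both implications into a statement about carry-free addition. Writing $q^{1/\ddd}=2^{\ell'}$ and applying the factorization $x^{\ddd}-1=(x-1)(1+x+\cdots+x^{\ddd-1})$ with $x=2^{\ell'}$ gives
\[
M \;:=\; \frac{q-1}{q^{1/\ddd}-1} \;=\; \sum_{j=0}^{\ddd-1} 2^{j\ell'},
\]
so $M$ is the integer whose binary expansion has a $1$ exactly in the positions $0,\ell',2\ell',\dots,(\ddd-1)\ell'$ and $0$ elsewhere. For the ``if'' direction, if the $\ddd$ consecutive length-$\ell'$ blocks of $a$ all represent the same integer $r\in\{0,\dots,2^{\ell'}-1\}$, then by definition $a=\sum_{j=0}^{\ddd-1} r\,2^{j\ell'}=rM$, so $M\mid a$.

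For the ``only if'' direction, I would suppose $a=kM$ with $k$ a nonnegative integer and $a<q=2^{\ell'\ddd}$. Since $M\ge 2^{(\ddd-1)\ell'}$ (the largest term in the sum), we get $k = a/M < 2^{\ell'\ddd}/2^{(\ddd-1)\ell'}=2^{\ell'}$, so $k\le 2^{\ell'}-1$ and $k$ has an $\ell'$-bit binary expansion. Then in $a=kM=\sum_{j=0}^{\ddd-1} k\,2^{j\ell'}$, the $j$-th summand occupies only bit positions $j\ell',\dots,(j+1)\ell'-1$, and these windows are pairwise disjoint; hence the sum is carry-free and the binary expansion of $a$ is precisely the $\ell'$-bit block of $k$ repeated $\ddd$ times. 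In particular every length-$\ell'$ block of $a$ equals $k$, which is exactly the claimed repetition pattern.

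I do not anticipate a genuine obstacle here; the two points that need a line of care are the bound $k\le 2^{\ell'}-1$ (so that a carry cannot push $a$ out of the $\ddd$-block range) and the observation that the $\ddd$ shifted copies $k\,2^{j\ell'}$ live in disjoint bit-windows, so that the addition defining $kM$ introduces no carries and the block structure is preserved exactly. Everything else is the routine geometric-series identity for $M$.
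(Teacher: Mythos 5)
Your proof is correct. The paper states this as an observation without supplying a proof, so there is nothing to compare against, but the argument you give is exactly the natural one: identify $M := (q-1)/(q^{1/\ddd}-1) = \sum_{j=0}^{\ddd-1} 2^{j\ell'}$ via the geometric-series factorization, observe that ``all $\ell'$-blocks equal $r$'' is literally the statement $a = rM$, and for the converse use $a < q$ and $M \geq 2^{(\ddd-1)\ell'}$ to force $k := a/M < 2^{\ell'}$, so that $a = kM = \sum_{j=0}^{\ddd-1} k\cdot 2^{j\ell'}$ is a carry-free sum of disjointly supported shifts of $k$ and hence has the repeated-block form. The two points you flag — the bound on $k$ and the disjointness of the bit windows — are indeed the only places that need a word of care, and you handle both correctly.
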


\begin{lemma}\label{lem:equiv-bad-char}
Let $q=2^{\ell'\ddd}$ for some natural numbers $\ell'$ and $d$. Let $H \leq \F_q^\times$ be a subgroup of order $|H| = (q - 1)/ (q^{1/d} - 1)$ and $\mathcal H$ be the collection of cosets $gH$.
Then any monomial $X^aY^b$ with $0 \leq a, b \leq q - 1$ is $(\mathcal H, q)$-bad if and only if both of the following conditions hold:
\begin{enumerate}
\item $a \vee b = q - 1$
\item For every $r,s \in \{0, 1 \dots, d - 1\}$, there does not exist $j \in \{0, 1, \dots, \ell' -1  \}$ such that $b_{r\ell' + j} = a_{s\ell' + j} = 1$ and $a_{r\ell' + j} = b_{s\ell' + j} = 0$.
\end{enumerate}
\end{lemma}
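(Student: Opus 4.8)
The plan is to derive Lemma~\ref{lem:equiv-bad-char} directly from Corollary~\ref{cor:bad-char-first-half} by translating its two conditions into the ``repeating block'' language provided by the two preceding observations. Condition~(1) is literally identical, so all the work goes into showing that, for the specific subgroup order $|H| = (q-1)/(q^{1/\ddd}-1)$, the second condition of Corollary~\ref{cor:bad-char-first-half}---namely the existence of some $i \equiv b \pmod{|H|}$ with $i \leq_2 a \wedge b$---is equivalent to the new bit-pattern condition~(2). Throughout I will assume condition~(1) holds (so $a \vee b = q-1$, meaning every bit position has $a_k = 1$ or $b_k = 1$), since if it fails both characterizations agree that $X^aY^b$ is good.

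First I would set up notation: write each of $a, b$ in its $\ell'\ddd$-bit expansion and group the bits into $\ddd$ blocks of length $\ell'$, so block $r$ of $a$ is $\overline{a_{(r+1)\ell'-1}\dotsm a_{r\ell'}}$. The key structural fact is that, by the second observation, an integer $i < q$ satisfies $i \equiv b \pmod{|H|}$ if and only if $i$ has all $\ddd$ blocks equal to a common $\ell'$-bit string, \emph{and} that common string, read as an $\ell'$-bit integer, is $\equiv b \pmod{q^{1/\ddd}-1}$ --- but actually more is needed. I would be careful here: the observation characterizes multiples of $(q-1)/(q^{1/\ddd}-1)$ as exactly the block-repeating integers below $q$; combined with the requirement $i \leq_2 a\wedge b$ (so in particular $i \leq_2 b$, hence $i < q$, hence $i$ is one of these block-repeating values) and $i \equiv b$, I need $b - i \equiv 0$, i.e. $b-i$ is block-repeating. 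Since $i \leq_2 b$, the first observation gives that $b-i$ has bits $b_k - i_k$, so the condition ``$b-i$ block-repeating'' says $b_{r\ell'+j} - i_{r\ell'+j}$ is independent of $r$ for each fixed $j \in \{0,\dots,\ell'-1\}$.

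With this reformulation, the existence question becomes: can we choose a $0/1$-valued array $i_{r\ell'+j}$ such that (i) $i_k \leq a_k$ and $i_k \leq b_k$ for all $k$ (the $i \leq_2 a\wedge b$ constraint), and (ii) for each $j$, the quantity $b_{r\ell'+j} - i_{r\ell'+j}$ does not depend on $r$? For a fixed column index $j$, look across the $\ddd$ rows $r = 0,\dots,\ddd-1$. If for some $j$ the pairs $(a_{r\ell'+j}, b_{r\ell'+j})$ and $(a_{s\ell'+j}, b_{s\ell'+j})$ are ``$(0,1)$ in one row and $(1,0)$ in another'' --- i.e. $b_{r\ell'+j}=a_{s\ell'+j}=1$ and $a_{r\ell'+j}=b_{s\ell'+j}=0$ --- then in row $r$ we are forced to take $i_{r\ell'+j}=0$ (since $a_{r\ell'+j}=0$), giving difference $b_{r\ell'+j}-i_{r\ell'+j}=1$, while in row $s$ we have $b_{s\ell'+j}=0$ forcing difference $0$, contradicting column-constancy; so no valid $i$ exists for that $b$, i.e. $X^aY^b$ is good. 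Conversely, if no such ``conflicting column'' exists, I would exhibit a valid $i$ explicitly: for each column $j$, note the surviving pair-types are a sub-collection of $\{(1,1),(1,0),(0,1)\}$ (using condition~(1) to rule out $(0,0)$) that contains no $(0,1)$/$(1,0)$ pair simultaneously; set the common difference value $d_j$ to $0$ if any row in column $j$ has type $(0,1)$ or $(1,1)$ and to $1$ if all rows in column $j$ are type $(1,0)$, then define $i_{r\ell'+j} = b_{r\ell'+j} - d_j$ and check this lands in $\{0,1\}$ and respects $i_k \leq a_k, i_k\leq b_k$ in every case. This shows a valid $i$ exists exactly when condition~(2) holds, completing the equivalence.

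The main obstacle I anticipate is \emph{bookkeeping correctness} rather than any deep idea: getting the directions of the implications straight (Corollary~\ref{cor:bad-char-first-half} says ``bad iff $\exists i$ with...'', so ``good iff $\forall i$...''), making sure the reduction $b - i \equiv 0 \pmod{|H|}$ truly forces $b-i$ to be block-repeating (which requires knowing $b - i < q$, for which I use $i \geq 0$ and $b \leq q-1$, and that all relevant $i$ with $i\leq_2 a\wedge b$ automatically satisfy $i<q$), and handling the $a=b=q-1$ corner uniformly (here condition~(2) is vacuously satisfied since there is no $(0,1)$ or $(1,0)$ pair, matching Lemma~\ref{rem:bad-special-case}). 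I would also double-check the edge case $\ddd=1$, where condition~(2) becomes vacuous and the lemma should reduce to ``bad iff $a\vee b = q-1$,'' consistent with $|H| = q-1$ forcing $i \equiv b$ to pin $i$ down to exactly one block. Once the reformulation via the two observations is in place, each case is a short finite check.
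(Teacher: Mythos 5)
Your overall strategy is exactly the paper's: translate both directions through Corollary~\ref{cor:bad-char-first-half} using the two observations, argue the forward direction by showing a conflicting column forces contradictory values of the common difference $d_j$, and argue the converse by explicitly building an $i$ with $i\equiv b \pmod{|H|}$ and $i\leq_2 a\wedge b$. The forward direction is fine. However, your explicit construction in the converse direction is wrong --- the rule you give for $d_j$ is essentially backwards.

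Concretely, you set $d_j=0$ whenever column $j$ contains a row of type $(a_k,b_k)=(0,1)$ or $(1,1)$, which forces $i_k=b_k$ in that column. But a row of type $(0,1)$ has $a_k=0$, so the constraint $i\leq_2 a$ forces $i_k=0$ there, making the difference $b_k-i_k=1$, not $0$. Your rule thus produces $i_k=b_k=1>0=a_k$, violating $i\leq_2 a$. A small counterexample with $\ell'=2$, $\ddd=2$, $q=16$: take $a=\overline{1010}=10$ and $b=\overline{1111}=15$. Both conditions of the lemma hold (column $j=0$ is all $(0,1)$, column $j=1$ is all $(1,1)$, so no conflict), and indeed $i=0$ (or $i=10$) witnesses badness. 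But your rule gives $d_0=d_1=0$, hence $i=b=15$, which does not lie in the $2$-shadow of $a\wedge b=10$. Symmetrically, your $d_j=1$ case (all rows type $(1,0)$, so all $b_k=0$) produces $i_k=b_k-1=-1$, which is not a bit at all.

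The correct rule is the reverse of yours: set $d_j=0$ if column $j$ contains a row with $b_k=0$ (type $(1,0)$), forcing $i_k=b_k$ there; set $d_j=1$ if all rows have $b_k=1$ (types $(0,1)$ or $(1,1)$ only), giving $i_k=0$ throughout that column. Equivalently --- and this is the paper's phrasing --- set $i_k=b_k$ on columns where the $b$-bits disagree and $i_k=0$ on columns where they all agree. With that rule, $i\leq_2 b$ and column-constancy are immediate, and $i\leq_2 a$ follows exactly from condition~(2) together with $a\vee b = q-1$: the only danger is a column containing both a row with $b_{k'}=0$ (hence $a_{k'}=1$ by condition~(1)) and a row with $b_k=1$, $a_k=0$, and that configuration is precisely the forbidden $(0,1)$/$(1,0)$ pattern. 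Once you swap your $d_j$ rule accordingly, the argument goes through and matches the paper's proof.
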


\begin{proof}
We first prove the forward direction. If $X^aY^b$ is $(\mathcal H, q)$-bad, then there exists $i \equiv b \mod |H|$ such that $i \leq_2 a \wedge b$ by Corollary \ref{cor:bad-char-first-half}. Since $i \leq_2 b$ and $b - i \equiv 0 \mod |H|$, we know that if there exist $r,s \in \{0, 1 \dots, d - 1\}$ and $j \in \{0, 1, \dots, \ell' -1 \}$ such that $b_{r\ell' + j} = 1$ and  $b_{s\ell' + j} = 0$, then $i_{r\ell' + j} = 1$ and $i_{s\ell' + j} = 0$. 
Note that $i \leq_2 a$ and $i_{r\ell' + j} = 1$ together imply that $a_{r\ell' + j} = 1$.  Note also that $a \vee b = q - 1$ and $b_{s\ell'+j} = 0$ together imply that $a_{s\ell' + j} = 1$. Hence, it is never the case that $b_{r\ell' + j} = a_{s\ell' + j} = 1$ and $a_{r\ell' + j} = b_{s\ell' + j} = 0$.

We now show that the converse also holds. Suppose that for every $r,s \in \{0, 1 \dots, d - 1\}$, there does not exist $j \in \{0, 1, \dots, \ell' -1  \}$ such that $b_{r\ell' + j} = a_{s\ell' + j} = 1$ and $a_{r\ell' + j} = b_{s\ell' + j} = 0$. Then we can construct a whole number $i \equiv b \mod |H|$ such that $i \leq_2 a \wedge b$. For each $j \in \{0, 1,\dots, \ell' - 1 \}$ we assign the values of  $i_j, i_{n + j}, \dots, i_{(d - 1)\ell' + j}$ according to the following rules:
\begin{enumerate}
\item If $b_{j} = b_{\ell' + j} = \dots =  b_{(d -1)\ell' + j}$, then let $i_{j} = i_{\ell' + j} = \dots =  i_{(d -1)\ell' + j} = 0$.
\item If there exist $r, s$ such that $b_{r\ell' + j} \neq b_{s\ell' + j}$, then let $i_{j} = b_{j}, i_{\ell' + j} = b_{\ell' + j}, \dots,  i_{(d - 1)\ell' + j} = b_{(d - 1)\ell' + j}$.
\end{enumerate}
Note that our construction ensures that $i \leq_2 b$. Note also that for every $r, s \in \{0, 1, \dots, d - 1\}$ and every $j \in \{0, 1, \dots, \ell' -1 \}$ we have $b_{r\ell' + j} - i_{r\ell' + j} = b_{s\ell' + j} - i_{s\ell' + j}$, which implies that $b - i \equiv 0  \mod |H|$. It remains to show that $i \leq_2 a$. We need not focus on the bits in $i$ that are zero, as those will satisfy the $2$-shadow condition. We are only concerned about the case that there exist $r \in \{0, 1, \dots, d- 1\}$ and $j \in \{0, 1, \dots, \ell' -1 \}$ such that $a_{r\ell' + j} = 0$ and $i_{r\ell' + j} = 1$. If such $r$ and $j$ exist, then we deduce from the rules of our construction that $b_{r\ell' + j} = 1$ and there exists $s \in \{0, 1, \dots, d - 1\}$ such that $b_{s\ell' + j} = 0$. In order to ensure that $a \vee b = q - 1$, we must have $a_{s\ell' + j} = 1$. However, this contradicts our original assumption that there does not exist $j \in \{0, 1, \dots, \ell' -1  \}$ such that $b_{r\ell' + j} = a_{s\ell' + j} = 1$ and $a_{r\ell' + j} = b_{s\ell' + j} = 0$.
\end{proof}

With the description of $(\mathcal H, q)$-bad monomials in hand, we can easily give an exact count.
\begin{cor}\label{cor:exact-count}
Let $q=2^{\ell'\ddd}$ for some natural numbers $\ell'$ and $d$. Let $H \leq \F_q^\times$ be a subgroup of order $|H| = (q - 1)/ (q^{1/d} - 1)$ and $\mathcal H$ be the collection of cosets $gH$. Then, there are $(2^{\ddd + 1} - 1)^{\ell'}$ $(\mathcal H, q)$-bad monomials.
\end{cor}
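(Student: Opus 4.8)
The plan is to read the count off directly from the characterization of bad monomials in Lemma~\ref{lem:equiv-bad-char}. The key structural observation is that both defining conditions decouple across the $\ell'$ residue classes of bit-positions modulo $\ell'$. Write the index set $\{0,\dots,\ell'\ddd-1\}$ as the disjoint union, over $j \in \{0,\dots,\ell'-1\}$, of the ``columns'' $C_j = \{j,\ \ell'+j,\ \dots,\ (\ddd-1)\ell'+j\}$. Condition~(1), namely $a\vee b = q-1$, requires $\max(a_k,b_k)=1$ for each single position $k$, so it is a conjunction of per-position constraints and hence factors over the $C_j$. Condition~(2) has the shape ``for all $r,s$ and all $j$, $\neg(\cdots)$'', which we may rewrite as ``for all $j$: for all $r,s$, $\neg(\cdots)$''; for fixed $j$ the inner predicate involves only the bits of $a$ and $b$ in positions $r\ell'+j$ and $s\ell'+j$, all of which lie in $C_j$. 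Consequently the number of bad monomials equals $\prod_{j=0}^{\ell'-1} N_j$, where $N_j$ is the number of assignments of the bit-pairs $(a_k,b_k)_{k\in C_j}$ satisfying conditions~(1) and~(2) restricted to $C_j$; and since this local count is visibly independent of $j$, it equals $N_0^{\ell'}$.

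Next I would compute $N_0$. Abbreviate, for $r \in \{0,\dots,\ddd-1\}$, the pair $(\alpha_r,\beta_r) := (a_{r\ell'+j},\, b_{r\ell'+j}) \in \{0,1\}^2$ (for $j$ the fixed column). Condition~(1) on $C_j$ says $(\alpha_r,\beta_r)\neq(0,0)$ for every $r$, so each pair is one of the three types $(0,1)$, $(1,0)$, $(1,1)$. Condition~(2) on $C_j$: for $r=s$ the forbidden event would force $a_{r\ell'+j}=1$ and $a_{r\ell'+j}=0$, so it is vacuous; for $r\neq s$ it forbids having simultaneously $(\alpha_r,\beta_r)=(0,1)$ and $(\alpha_s,\beta_s)=(1,0)$. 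In words: among the $\ddd$ pairs, the types $(0,1)$ and $(1,0)$ cannot both appear.

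Finally I would count such sequences by inclusion--exclusion. Sequences in which every pair lies in $\{(0,1),(1,1)\}$ (i.e. the type $(1,0)$ is absent) number $2^\ddd$; symmetrically the sequences with $(0,1)$ absent number $2^\ddd$; and the sequences with both absent, i.e. every pair equal to $(1,1)$, number $1$. A valid configuration is exactly one avoiding $(1,0)$ \emph{or} avoiding $(0,1)$, so $N_0 = 2^\ddd + 2^\ddd - 1 = 2^{\ddd+1}-1$. Multiplying over the $\ell'$ columns yields $(2^{\ddd+1}-1)^{\ell'}$ bad monomials, as claimed. I expect the only delicate point is the decoupling step --- justifying the quantifier reordering in condition~(2) and confirming that, once $j$ is fixed, no cross-column interaction survives; after that the argument is elementary counting. (One should also note in passing that any assignment of the $\ell'\ddd$ bits produces a legitimate pair with $0\le a,b\le q-1$, so the range restriction discards nothing.)
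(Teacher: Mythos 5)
Your proposal is correct and follows essentially the same route as the paper's proof: decompose the bit-positions into the $\ell'$ residue classes modulo $\ell'$, observe that both conditions of Lemma~\ref{lem:equiv-bad-char} factor over these columns, and within a single column count the assignments of the $\ddd$ pairs from $\{(0,1),(1,0),(1,1)\}$ avoiding simultaneous occurrence of $(0,1)$ and $(1,0)$ via inclusion--exclusion to get $2^{\ddd+1}-1$. Your write-up is a bit more explicit about why the quantifier reordering in condition~(2) is harmless and why the $r=s$ case is vacuous, but the decomposition, the local count, and the final product are identical to the paper's argument.
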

\begin{proof}
For each $j \in \{0, \dots, \ell' - 1\}$, we count the ways of assigning values to $(a_i, b_i)$ for every $i \in I = \{ i : 0 \leq i \leq \ddd \ell' -1 \ \text{and} \ i \equiv j  \mod \ell' \}$  so that the monomial $X^aY^b$ satisfies both conditions of Lemma \ref{lem:equiv-bad-char}. To satisfy the first condition, we need $(a_i, b_i) \in \{(0, 1), (1, 0), (1, 1) \}$ for each $i \in I$. To satisfy the second condition, we must not simultaneously have  $(a_i, b_i) = (0, 1)$ and $(a_{i'}, b_{i'}) = (1, 0)$ for some $i, i' \in I$. There are $2^\ddd$ ways of assigning the values of $(a_i, b_i)$ using the set $\{(0, 1), (1, 1) \}$. Likewise, there are $2^\ddd$ ways to make the assignment using the set $\{(1, 0), (1, 1) \}$. With the exception of the assignment that only uses the set $\{(1, 1)\}$, which is counted twice, we have counted every assignment once. Hence, there are $2^{\ddd+ 1} - 1$ possible assignments for each $j$. Since the assignments for each of the $\ell'$ possible values of $j$ can be done independently, we conclude that there are $(2^{\ddd + 1} - 1)^{\ell'}$ bad monomials.
\end{proof}

To conclude this section, we summarize the properties of the \codenamens{s} constructed from our special choice of $\mathcal H$.
\begin{theorem}\label{thm:q-ary-wedge}
Let $q=2^{\ell'\ddd}$ for some natural numbers $\ell'$ and $d$. Let $H \leq \F_q^\times$ be a subgroup of order $|H| = (q - 1)/ (q^{1/d} - 1)$ and $\mathcal H$ be the collection of cosets $gH$. Then, the $(\mathcal H, q)$ \codename has the following properties.
\begin{itemize}
    \item The length of the code is $q^2$.
    \item The alphabet size is $q$.
    \item The redundancy is at most $(2^{\ddd + 1} - 1)^{\ell'}$.
    \item The code has the $(2^{\ell'} - 1)$-disjoint repair group property.
\end{itemize}
\end{theorem}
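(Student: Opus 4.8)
The plan is to assemble the facts already established in this section; the theorem is essentially a summary, so no new ideas are needed. The claims about length and alphabet size are immediate from Definition~\ref{def:WLC}: a $(\mathcal H,q)$ \codename consists of vectors $\eval_q(P)$ indexed by $\F_q^2$, so it has block length $q^2$ over the alphabet $\F_q$, which has size $q$.

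Before anything else I would check that the hypotheses make sense, i.e., that a subgroup $H \le \F_q^\times$ of order $(q-1)/(q^{1/\ddd}-1)$ exists and that $\mathcal H$ is an admissible choice in Definition~\ref{def:WLC}. Writing $q = 2^{\ell'\ddd}$ so that $q^{1/\ddd} = 2^{\ell'}$, we have
\[
\frac{q-1}{q^{1/\ddd}-1} = 1 + 2^{\ell'} + 2^{2\ell'} + \dots + 2^{(\ddd-1)\ell'},
\]
an integer dividing $q-1$; since $\F_q^\times$ is cyclic of order $q-1$, a subgroup $H$ of this order exists. Moreover $|H|$ divides the odd number $q-1$, hence is odd, so each coset $gH \in \mathcal H$ has odd size, as required by the definition of a wedge-lifted code.

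Next, for the redundancy bound I would combine Observation~\ref{lem:rate} with Corollary~\ref{cor:exact-count}: the former says the redundancy of any $(\mathcal H,q)$ \codename is at most the number of $(\mathcal H,q)$-bad monomials, and the latter—whose hypotheses on $q$ and $|H|$ are exactly those of the present theorem—says there are precisely $(2^{\ddd+1}-1)^{\ell'}$ of them. Finally, for the disjoint repair group property I would apply Proposition~\ref{prop:repaircount}, which gives $|\mathcal H|$ disjoint repair groups; since $\mathcal H$ is the set of all cosets of $H$ in $\F_q^\times$, we have $|\mathcal H| = (q-1)/|H| = q^{1/\ddd} - 1 = 2^{\ell'} - 1$.

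The only step with real content is the exact count of bad monomials, but that work has already been done in Lemma~\ref{lem:equiv-bad-char} and Corollary~\ref{cor:exact-count} (which in turn build on Corollary~\ref{cor:bad-char-first-half}). Consequently there is no genuine obstacle here; the proof is a short bookkeeping exercise, and the one thing to be careful about is simply aligning the hypotheses of those earlier statements with the hypotheses of this theorem.
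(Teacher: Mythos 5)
Your proposal is correct and takes the same approach the paper implicitly intends: the theorem is a summary, and each bullet follows by combining Definition~\ref{def:WLC}, Observation~\ref{lem:rate}, Corollary~\ref{cor:exact-count}, and Proposition~\ref{prop:repaircount}, together with the computation $|\mathcal H| = (q-1)/|H| = q^{1/\ddd}-1 = 2^{\ell'}-1$. Your added sanity check that $(q-1)/(q^{1/\ddd}-1)$ is an integer dividing $q-1$ (so $H$ exists in the cyclic group $\F_q^\times$) and that this order is odd (so the cosets are admissible in Definition~\ref{def:WLC}) is a useful detail the paper leaves implicit.
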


\section{Trace of Wedge-Lifted Codes}
\label{sec:trace}

In this section, we take the trace of our codes in Theorem~\ref{thm:q-ary-wedge} and set parameters, in order to prove our main theorem, Theorem~\ref{thm:main}.

We begin with a lemma that says that we can take the coordinate-wise trace of a code over a field of characteristic two without hurting the redundancy or locality.

\begin{lemma}\label{lem:trace-code}
Let $\mathcal C \subseteq \F_q^{q^2}$ be a $(\mathcal H,q)$-wedge-lifted code. Then, there exists a binary code $\mathcal C' \subseteq \F_2^{q^2}$ with the same or lower redundancy and the same number of disjoint repair groups.
\end{lemma}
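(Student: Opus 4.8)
The plan is to take $\mathcal C' = \tr_2(\mathcal C)$, the coordinate-wise trace of the wedge-lifted code down to $\F_2$, and verify that this preserves both the redundancy bound and the disjoint repair groups. Since $\mathcal C \subseteq \F_q^{q^2}$ with $q = 2^\ell$, Theorem~\ref{thm:delsarte} immediately gives $\dim \mathcal C \le \dim \tr_2(\mathcal C) \le \ell \cdot \dim \mathcal C$. The issue is that the \emph{upper} bound on the dimension of the trace code is a factor of $\ell$ larger, which would naively blow up the redundancy by a factor of $\ell$; so a direct application of Delsarte is not enough, and I expect this to be the main obstacle. The resolution is to bound the redundancy of $\mathcal C'$ directly rather than going through its dimension: the redundancy of $\mathcal C'$ is $q^2 - \dim_{\F_2}\tr_2(\mathcal C)$, and I would instead exhibit enough explicit parity checks for $\tr_2(\mathcal C)$.

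First I would set up the parity-check viewpoint. Each wedge $W_{H,\vec p}$ (with $H\in\mathcal H$ of odd size, characteristic two) gives an $\F_q$-linear parity check on $\mathcal C$: for $c = \eval_q(P) \in \mathcal C$, $\sum_{\vec x \in W_{H,\vec p}} c_{\vec x} = P|_{W_{H,\vec p}} = 0$. Now for any $\F_q$-linear functional $\lambda\mapsto \sum_{\vec x} \mu_{\vec x}\lambda_{\vec x}$ vanishing on $\mathcal C$ and any $\beta\in\F_q$, the map $c' \mapsto \tr_2\!\big(\sum_{\vec x} \beta\mu_{\vec x} c'_{\vec x}\big)$ — wait, that is not $\F_2$-linear in $c'$ as a function of the preimage. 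The cleaner route: $\tr_2(\mathcal C)^\perp$ (the $\F_2$-dual) contains, by Delsarte's theorem in its standard form, exactly the set $\{\tr_2(\beta \cdot h) : h \in \mathcal C^\perp,\ \beta\in\F_q\}$ viewed appropriately — more precisely, $\tr_2(\mathcal C)^\perp = (\mathcal C^\perp)|_{\F_2}$ in the sense that a vector $u\in\F_2^{q^2}$ is orthogonal to $\tr_2(\mathcal C)$ iff $u$ lies in the $\F_2$-span of $\{(\tr_2(\beta h_{\vec x}))_{\vec x} : h\in\mathcal C^\perp,\beta\in\F_q\}$. Thus $\dim_{\F_2}\tr_2(\mathcal C)^\perp \le \ell\cdot\dim_{\F_q}\mathcal C^\perp$, which is again the lossy bound. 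So instead I would argue as follows: the wedge parity checks are not just $\F_q$-linear but have a special form. Since $P|_{W_{H,\vec p}} = \sum_{\vec x\in W_{H,\vec p}} P(\vec x)$ and this holds for \emph{all} $P$ defining codewords, and since $\tr_2$ commutes with $\F_2$-linear combination, for any $c' = \tr_2(c)\in\mathcal C'$ we get $\sum_{\vec x \in W_{H,\vec p}} c'_{\vec x} = \sum_{\vec x\in W_{H,\vec p}}\tr_2(c_{\vec x}) = \tr_2\!\big(\sum_{\vec x\in W_{H,\vec p}} c_{\vec x}\big) = \tr_2(0) = 0$. So every wedge parity check descends verbatim to an $\F_2$-valued parity check on $\mathcal C'$ — with the \emph{same} support.

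This gives both conclusions cleanly. For the disjoint repair groups: by Proposition~\ref{prop:repaircount}, $\mathcal C$ has $|\mathcal H|$ disjoint repair groups at every coordinate, realized by the sets $W_{H,\vec p}\setminus\{\vec p\}$; the computation above shows the identical recovery relation $c'_{\vec p} = \sum_{\vec x\in W_{H,\vec p}\setminus\{\vec p\}} c'_{\vec x}$ holds over $\F_2$ for every $c'\in\mathcal C'$, so $\mathcal C'$ has at least $|\mathcal H| = 2^{\ell'}-1$ disjoint repair groups at every coordinate. For the redundancy: rather than comparing dimensions, I would invoke Observation~\ref{lem:rate}-style reasoning one level up, or more directly note that $\mathcal C \subseteq \F_q^{q^2}$ can itself be viewed as the $\F_q$-points of a code; the trace construction followed by Delsarte's \emph{lower} bound $\dim_{\F_2}\tr_2(\mathcal C) \ge \dim_{\F_q}\mathcal C$ combined with the length being $q^2$ over $\F_2$ versus $\dim_{\F_q}\mathcal C$ counted in $\F_q$-symbols is still off by a factor of $\ell$ — so the honest statement is that one must measure redundancy consistently. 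Here the resolution the paper intends is: redundancy of $\mathcal C$ over $\F_q$ is $q^2 - \dim_{\F_q}\mathcal C \le (2^{\ddd+1}-1)^{\ell'}$ by Corollary~\ref{cor:exact-count}, and the number of $\F_2$-parity checks we produced for $\mathcal C'$ is at most $\ell$ times the number of $\F_q$-parity checks of $\mathcal C$ — but each $\F_q$-parity check, when we do not multiply by scalars $\beta$, yields exactly \emph{one} $\F_2$-parity check as shown. Hence the number of (the natural) $\F_2$-parity checks on $\mathcal C'$ is at most the number of $\F_q$-wedge-parity-checks on $\mathcal C$, namely the redundancy bound of $\mathcal C$, so $\mathcal C'$ has redundancy at most $(2^{\ddd+1}-1)^{\ell'}$ as well. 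The subtle point to get right — and the part I'd be most careful about — is showing these scalar-free wedge checks already span all of $\tr_2(\mathcal C)^\perp$ over $\F_2$ (equivalently, that we have not lost anything by refusing to multiply by $\beta\in\F_q$); this follows because $\mathcal C$ is defined \emph{as} the common zero set of the wedge checks, so $\mathcal C^\perp$ is $\F_q$-spanned by them, and the trace/Delsarte correspondence then shows $\tr_2(\mathcal C)^\perp$ is $\F_2$-spanned by $\{\tr_2(\beta\cdot w) : w \text{ a wedge check}, \beta\in\F_q\}$; but $\tr_2(\beta w)$ for a wedge check $w$ (whose entries are $0$ or $1$) equals $\tr_2(\beta)\cdot w$, a scalar in $\F_2$ times $w$, so the span is just the $\F_2$-span of the wedge checks themselves. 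That collapse from $\ell$ checks per relation down to $1$ is exactly what saves the $\log q$ factor, matching the remark in the introduction, and completes the proof.
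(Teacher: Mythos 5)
Your construction is the same as the paper's (take $\mathcal C' = \tr_2(\mathcal C)$), and your treatment of the disjoint repair groups is exactly right: the wedge parity checks have $\{0,1\}$-valued entries, so $\tr_2$ commuting with addition sends each wedge check on $\mathcal C$ to the identical wedge check on $\mathcal C'$, preserving the $|\mathcal H|$ disjoint repair groups coordinate by coordinate. This matches the paper's Proposition~\ref{prop:repaircount} argument.

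Where you go astray is the redundancy step. You announce early on that Delsarte's bound ``would naively blow up the redundancy by a factor of $\ell$'' and later repeat that Delsarte's lower bound ``is still off by a factor of $\ell$,'' and on this basis you abandon the direct dimension comparison in favor of a parity-check-counting detour. But there is no factor of $\ell$ issue at all. Redundancy in this paper is $K^\perp = N - K$, where $N$ is the block length (number of coordinates, alphabet-independent) and $K = \log_{|\Sigma|}|\mathcal C| = \dim$ over the respective base field. Both $\mathcal C$ and $\tr_2(\mathcal C)$ have $N = q^2$, and Delsarte's \emph{lower} bound $\dim_{\F_2}\tr_2(\mathcal C) \geq \dim_{\F_q}\mathcal C$ immediately gives $q^2 - \dim_{\F_2}\tr_2(\mathcal C) \leq q^2 - \dim_{\F_q}\mathcal C$, i.e., the trace code has the same or smaller redundancy. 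That one line is the entirety of the paper's redundancy argument; the upper bound $\dim_{\F_2}\tr_2(\mathcal C) \leq \ell \cdot \dim_{\F_q}\mathcal C$ (which would make redundancy \emph{smaller}, not larger) is irrelevant and never used.

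Your substitute argument via duals also contains an imprecision worth flagging: you describe $\tr_2(\mathcal C)^\perp$ as the $\F_2$-span of $\{\tr_2(\beta h) : h \in \mathcal C^\perp,\ \beta\in\F_q\}$, but Delsarte's theorem actually says $\tr_2(\mathcal C)^\perp = \mathcal C^\perp \cap \F_2^{q^2}$ (the subfield subcode of the dual), whereas the set you wrote is $\tr_2(\mathcal C^\perp) = (\mathcal C \cap \F_2^{q^2})^\perp$; these differ in general. In your specific setting they happen to coincide because $\mathcal C^\perp$ is $\F_q$-spanned by $\{0,1\}$-valued wedge checks, so an $\F_2$-basis of the $\F_2$-span of those checks is also an $\F_q$-basis of $\mathcal C^\perp$, forcing $\dim_{\F_2}(\mathcal C^\perp\cap\F_2^{q^2}) = \dim_{\F_q}\mathcal C^\perp$ — but that extra lemma is exactly the work you need to do to make the detour rigorous, and it only re-derives the Delsarte lower bound you dismissed. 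In short: your final conclusion is correct, but you rejected the one-line argument that works and replaced it with a longer one containing an error of statement that happens to be harmless here.
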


\begin{proof}
For any polynomial $P$ such that $\eval(P) \in \mathcal C$, $H \in \mathcal H$, and $(x,y) = \vec{p} \in \F_q^2$, we know that $P|_{W_{H,\vec{p}}} = 0$. Therefore, using the fact that $\tr_2$ commutes with addition,
\[\tr_2(0) = \tr_2\left(\sum_{\alpha \in H}\sum_{T \in \F_q} P(T,\alpha(T-x) + y)\right) = \sum_{\alpha \in H}\sum_{T \in \F_q} \tr_2(P(T,\alpha(T-x) + y))\]
Therefore, if we view the code $\tr_2(\mathcal C)$ as a code over $\F_2^{q^2}$ with coordinates indexed by $\F_q^2$, any index $\vec{p}$ of a codeword $\eval(\tr_2\circ P)$ of $\tr_2(\mathcal C)$ can be repaired by summing over the lines whose slopes are in $H$. So, $\tr_2(\mathcal C)$ and $\mathcal C$ both have $|\mathcal H|$ disjoint repair groups.

Furthermore, Theorem \ref{thm:delsarte}  states that $\dim(\tr_2(\mathcal C)) \geq \dim(\mathcal C)$, so because these codes have the same length, $\tr_2(\mathcal C)$ has the same or lower redundancy as $\mathcal C$.
\end{proof}

Finally, we are in a position to prove Theorem~\ref{thm:main}.  We restate the theorem for the reader's convenience. 

\begin{theorem*}[Theorem~\ref{thm:main}, restated]
For positive integers $\ddd$ and infinitely many $N$, for $t=N^{1/2\ddd}$, there exist binary codes of length $N$ with redundancy $t^{\log_2(2-2^{-\ddd})}\sqrt{N}$ that have the the $(t-1)$-DRGP.
\end{theorem*}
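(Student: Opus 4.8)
The plan is to combine Theorem~\ref{thm:q-ary-wedge} with the trace construction of Lemma~\ref{lem:trace-code}, and then optimize the free parameter $\ell'$ (equivalently, the ratio between $\ell'$ and $\ddd$) to express the redundancy as a power of $t$ times $\sqrt{N}$. Concretely, fix the integer $\ddd\ge 1$ from the theorem statement, let $\ell'$ range over the positive integers, set $\ell=\ell'\ddd$, $q=2^{\ell}$, and take $H\le\F_q^\times$ of order $|H|=(q-1)/(q^{1/\ddd}-1)$ with $\mathcal H$ the collection of its cosets. Theorem~\ref{thm:q-ary-wedge} then gives a $q$-ary \codename of length $q^2$, with the $(2^{\ell'}-1)$-DRGP and redundancy at most $(2^{\ddd+1}-1)^{\ell'}$. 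Applying Lemma~\ref{lem:trace-code} produces a binary code of the same length $N=q^2=2^{2\ell'\ddd}$, with the same number $2^{\ell'}-1$ of disjoint repair groups, and redundancy at most $(2^{\ddd+1}-1)^{\ell'}$.

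The next step is pure bookkeeping with exponents. Set $t:=2^{\ell'}-1$; I would actually find it cleaner to first carry out the computation with the ``idealized'' value $t=2^{\ell'}$ and note at the end that using $t=2^{\ell'}-1$ only helps (both the DRGP count drops by one and the redundancy bound is unchanged). With $t=2^{\ell'}$ we have $N=2^{2\ell'\ddd}=t^{2\ddd}$, i.e. $t=N^{1/2\ddd}$, which matches the statement. For the redundancy, write $2^{\ddd+1}-1 = 2\cdot 2^{\ddd}\cdot(1-2^{-\ddd-1})$; hmm, it is slightly cleaner to write $2^{\ddd+1}-1 = 4\cdot 2^{\ddd-1}(1-2^{-\ddd-1})$. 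Let me instead directly compute the exponent: we want to write $(2^{\ddd+1}-1)^{\ell'}$ in the form $t^{c}\sqrt{N} = 2^{c\ell'}\cdot 2^{\ell'\ddd}$, so $c$ must satisfy $2^{c\ell'+\ell'\ddd} = (2^{\ddd+1}-1)^{\ell'}$, i.e. $c+\ddd=\log_2(2^{\ddd+1}-1)$, giving $c=\log_2(2^{\ddd+1}-1)-\ddd=\log_2\!\big((2^{\ddd+1}-1)/2^{\ddd}\big)=\log_2(2-2^{-\ddd})$. Hence the redundancy is at most $t^{\log_2(2-2^{-\ddd})}\sqrt N$, exactly as claimed.

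Finally I would tie up the two loose ends. First, the ``infinitely many $N$'' clause: as $\ell'$ ranges over $\{1,2,3,\dots\}$ we get infinitely many distinct values $N=2^{2\ell'\ddd}$, so the construction is non-vacuous. Second, I must replace the idealized $t=2^{\ell'}$ by the honest value: the code from Theorem~\ref{thm:q-ary-wedge} has the $(2^{\ell'}-1)$-DRGP, so with $t:=N^{1/2\ddd}=2^{\ell'}$ it has the $(t-1)$-DRGP, matching the statement verbatim; and the redundancy bound $t^{\log_2(2-2^{-\ddd})}\sqrt N$ is stated in terms of this $t$, so no adjustment is needed (indeed $\log_2(2-2^{-\ddd})>0$ and the bound with $t$ replaced by $t-1$ would be slightly smaller, so the stated bound certainly holds). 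I do not anticipate a genuine obstacle here: the real content is entirely in Theorem~\ref{thm:q-ary-wedge} (the exact monomial count in Corollary~\ref{cor:exact-count} via Lemma~\ref{lem:equiv-bad-char}) and in Lemma~\ref{lem:trace-code} (Delsarte's theorem preserving dimension and the trace commuting with the wedge parity checks); the proof of Theorem~\ref{thm:main} itself is just assembling these and doing the logarithm computation above. The one spot that warrants a careful sentence is confirming that the trace step preserves both the block length $N=q^2$ and the count of disjoint repair groups, which is exactly what Lemma~\ref{lem:trace-code} asserts, so I would simply cite it.
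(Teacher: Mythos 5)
Your proposal is correct and follows essentially the same route as the paper: apply Theorem~\ref{thm:q-ary-wedge}, take the coordinate-wise trace via Lemma~\ref{lem:trace-code}, set $t=2^{\ell'}$ (so $N=q^2=t^{2\ddd}$ and the $(2^{\ell'}-1)$-DRGP becomes the $(t-1)$-DRGP), and rewrite $(2^{\ddd+1}-1)^{\ell'}=2^{\ddd\ell'}(2-2^{-\ddd})^{\ell'}=\sqrt{N}\,t^{\log_2(2-2^{-\ddd})}$. The only difference is presentational: you cite Lemma~\ref{lem:trace-code} and spell out the $(t-1)$ versus $t$ accounting explicitly, whereas the paper leaves the trace step implicit from the surrounding discussion; your momentary ``set $t:=2^{\ell'}-1$'' is a false start that you correct, and the final argument is sound.
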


\begin{proof}[Proof of Theorem \ref{thm:main}]
 We apply Theorem~\ref{thm:q-ary-wedge}.  In particular, we choose $t = 2^{\ell'}$, and choose the $\ddd$ in the statement of Theorem~\ref{thm:main} to be the same as the $\ddd$ in the statement of Theorem~\ref{thm:q-ary-wedge}. Then $q = 2^{\ell' d} = t^d$, and the length of the resulting code is $N = q^2 = t^{2d}$.  The redundancy is 
\begin{align*}(2^{d+1}-1)^{\ell'} &= 2^{d \cdot \log_2 t} \cdot ( 2 - 2^{-d})^{\log_2 t} \\
&= \sqrt{N} \cdot (2 - 2^{-d})^{\log_2(t)} \\
&= \sqrt{N} \cdot t^{\log_2(2 - 2^{-d})},
\end{align*}
which proves the theorem.
\end{proof}

As a corollary of Lemma~\ref{lem:trace-code}, we can also make the codes of Corollary~\ref{cor:fvy}, which match \cite{FVY15} for a dense collection of $\alpha\in(0,1/2)$, into binary codes.  It would be very interesting if we could get a quantitative result like that of Theorem~\ref{thm:main} for a dense collection of $\alpha \in (0,1/2)$.
\begin{theorem}
  For all $\alpha\in(0,1/2)$, there exists infinitely many $N$ such that there exists a binary code, obtained by taking the trace code of a wedge-lifted code, with the $t$-DRGP for $t=N^{\alpha+o(1)}$, and redundancy at most $t\sqrt{N}$. 
  \label{thm:fvy}
\end{theorem}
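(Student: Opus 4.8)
The plan is to compose the two ingredients already established: Corollary~\ref{cor:fvy}, which produces $q$-ary wedge-lifted codes with the desired length, locality, and redundancy for a dense set of exponents, and Lemma~\ref{lem:trace-code}, which turns any wedge-lifted code into a binary code without increasing the redundancy or decreasing the number of disjoint repair groups. In other words, Theorem~\ref{thm:fvy} is simply the ``binary shadow'' of Corollary~\ref{cor:fvy}, with the trace step doing all the alphabet reduction.

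Concretely, first fix $\alpha\in(0,1/2)$ and invoke Corollary~\ref{cor:fvy}: for infinitely many $N=q^2$ (with $q$ a power of $2$) there is a subgroup $H\le\F_q^\times$ with associated coset family $\mathcal H$ such that the $(\mathcal H,q)$ wedge-lifted code $\mathcal C\subseteq\F_q^{q^2}$ has the $t$-DRGP with $t=|\mathcal H|=(q-1)/|H|=N^{\alpha+o(1)}$ and redundancy at most $t\sqrt N$. Then apply Lemma~\ref{lem:trace-code} to $\mathcal C$ to obtain the binary code $\mathcal C'=\tr_2(\mathcal C)\subseteq\F_2^{q^2}$. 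Since $\tr_2$ acts coordinate-wise, $\mathcal C'$ has the same block length $N=q^2$; by the lemma its redundancy is no larger than that of $\mathcal C$, hence at most $t\sqrt N$; and by the lemma it retains $|\mathcal H|=t$ disjoint repair groups, so it has the $t$-DRGP. Because $N$ and $t$ are both unchanged by passing to the trace, the relation $t=N^{\alpha+o(1)}$ (over the same infinite family of $N$) is inherited verbatim, which is exactly the claim.

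There is no substantive obstacle here — the statement follows by concatenating Corollary~\ref{cor:fvy} and Lemma~\ref{lem:trace-code}. The only points needing a moment of care are bookkeeping: that the trace does not change the block length, so the bound ``redundancy $\le t\sqrt N$'' transfers as soon as the dimension does not drop; that the $o(1)$ error term refers to the same sequence of $N$'s produced by Corollary~\ref{cor:fvy} (including the dyadic-rational approximation argument when $\alpha$ is irrational); and that it is precisely Lemma~\ref{lem:trace-code}'s appeal to Delsarte's theorem (Theorem~\ref{thm:delsarte}) that lets us avoid paying a $\log q$ factor in the redundancy when moving to the binary alphabet. Any real improvement to this theorem would have to come from sharpening the naive bad-monomial count feeding Corollary~\ref{cor:fvy} (as is done for the special subgroups in Theorem~\ref{thm:q-ary-wedge}), not from the trace step, which is essentially lossless.
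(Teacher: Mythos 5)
Your proposal is correct and matches the paper's own proof exactly: the paper proves Theorem~\ref{thm:fvy} in one line by applying Lemma~\ref{lem:trace-code} to Corollary~\ref{cor:fvy}. Your write-up simply fleshes out the same composition with the (correct) bookkeeping that the trace preserves block length, repair groups, and does not increase redundancy.
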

\begin{proof}
  Apply Lemma~\ref{lem:trace-code} to Corollary~\ref{cor:fvy}.
\end{proof}

\section{Conclusion}
\label{sec:conclusion}
In this paper, we introduced \emph{wedge-lifted codes}, which give an improved construction of binary codes with the $t$-DRGP for several $t \leq \sqrt{N}$.  
We conclude with some open questions.

\begin{enumerate}
\item For all $\alpha\in(0,1/2)$ with $t=N^{\alpha}$, are there binary codes with the $t$-DRGP and redundancy $O(t^{1-\varepsilon}\sqrt{N})$ for some $\varepsilon>0$, possibly (but ideally not) depending on $\alpha$?
Our work shows this is true for $\alpha=1/2\ddd$ when $\ddd$ is any positive integer.
The work of \cite{LW19} showed this is true (with an absolute $\varepsilon=0.425$) for nonbinary codes.

\item Can one improve \cite{RV16, Woo16} to prove better lower bounds on the redudancy of codes with the $t$-DRGP?
\end{enumerate}

\bibliographystyle{alpha}
\bibliography{bib}

\end{document}